\documentclass[preprint,5p,12pt,twocolumn,fleqn]{elsarticle}

\PassOptionsToPackage{hyphens}{url}
\usepackage{hyperref}
\usepackage[table]{xcolor}

\usepackage{amssymb,amsmath,colonequals,xspace,tikz}
\usetikzlibrary{calc,decorations.pathmorphing,fit,backgrounds,positioning,arrows,shapes}
\tikzstyle{knoten}=[circle,draw=black,thin,fill=white,inner sep=0pt,minimum size=4.5mm]
\tikzstyle{knotenklein}=[circle,draw=black,thin,fill=white,inner sep=0pt,minimum size=2.5mm]

\newtheorem{theorem}{Theorem}
\newtheorem{lemma}{Lemma}
\newtheorem{corollary}{Corollary}
\newproof{proof}{Proof}
\newdefinition{definition}{Definition}
\newdefinition{assumption}{Assumption}

\renewcommand{\o}[1]{\overline{#1}}
\newcommand{\CMCFPC}{$\textsc{BCMCFP}_{\mathbb{R}}$\xspace}

\makeatletter
\def\NAT@spacechar{~}
\makeatother

\let\turc\c
\usepackage{etoolbox}
\robustify\turc

\renewcommand{\c}{\ensuremath{c^\pi}}
\renewcommand{\b}{\ensuremath{b^\mu}}
\renewcommand{\d}{\ensuremath{d^{\pi,\mu}}}
\newcommand{\eb}{\ensuremath{{\o{e}}}}

\usepackage{eucal}

\makeatletter
\def\ps@pprintTitle{%
 \let\@oddhead\@empty
 \let\@evenhead\@empty
 \def\@oddfoot{}%
 \let\@evenfoot\@oddfoot}
\makeatother
% \journal{European Journal of Operations Research}

\begin{document}
	
	\begin{frontmatter}
		\title{A Network Simplex Method for the Budget-Constrained Minimum Cost Flow Problem}
		
		\author[TUKL]{Michael Holzhauser\corref{cor1}}
		\ead{holzhauser@mathematik.uni-kl.de}

		\author[TUKL]{Sven O. Krumke}
		\ead{krumke@mathematik.uni-kl.de}

		\author[TUKL]{Clemens Thielen}
		\ead{thielen@mathematik.uni-kl.de}

		\cortext[cor1]{Corresponding author. Fax: +49 (631) 205-4737. Phone: +49 (631) 205-2511}

		\address[TUKL]{University of Kaiserslautern, Department of Mathematics\\Paul-Ehrlich-Str.~14, D-67663~Kaiserslautern, Germany}

		\begin{abstract}
			We present a specialized network simplex algorithm for the \emph{budget-constrained minimum cost flow problem}, which is an extension of the traditional minimum cost flow problem by a second kind of costs associated with each edge, whose total value in a feasible flow is constrained by a given budget~$B$. We present a fully combinatorial description of the algorithm that is based on a novel incorporation of two kinds of integral node potentials and three kinds of reduced costs. We prove optimality criteria and combine two methods that are commonly used to avoid cycling in traditional network simplex algorithms into new techniques that are applicable to our problem. With these techniques and our definition of the reduced costs, we are able to prove a pseudo-polynomial running time of the overall procedure, which can be further improved by incorporating Dantzig's pivoting rule. Moreover, we present computational results that compare our procedure with Gurobi \citep{Gurobi}.
		\end{abstract}

		\begin{keyword}
			algorithms \sep network flow \sep minimum cost flow \sep network simplex
		\end{keyword}
	\end{frontmatter}

	\section{Introduction}

	In this paper, we present a specialized network simplex algorithm for the \emph{budget-constrained minimum cost flow problem}. This problem embodies a natural extension of the traditional minimum cost flow problem (cf., e.g., \citep{Ahuja}) by a second kind of costs, called \emph{usage fees}, which are linear in the flow on the corresponding edge and bounded by a given budget~$B$. This extension allows us to solve many related problems such as the budget-constrained maximum dynamic flow problem (since each dynamic flow can be represented as a traditional minimum cost flow (cf. \citep{FordFulkersonDynamicFlow})) or the application of the $\varepsilon$-constraint method to the bicriteria minimum cost flow problem (cf., e.g., \citep{ChankongHaimes}).

	Since it was published by \citet{DantzigNetworkSimplex} (originally designed for the transportation problem), the network simplex algorithm for the traditional minimum cost flow problem has been improved progressively and is widely believed to be one of the most efficient solution methods for the minimum cost flow problem at present (cf. \citep[pp. 451--452]{Ahuja}, \citep{KovacsMinCostFlowEvaluation}). Simplex-type methods usually have to cope with the risk to ``get stuck'' in an infinite loop with no progress -- an effect that is referred to as \emph{cycling}. However, \citet{CunninghamStronglyFeasible} introduced the notion of \emph{strongly feasible bases} that may be used to prevent cycling. While the sequence of operations with no progress may still be exponentially large, several authors such as \citet{CunninghamStalling} and \citet{AhujaOrlinOnDegeneratePivots} provide measurements to keep the length of such sequences polynomially bounded. At present, the network simplex algorithm with the best time complexity is due to \citet{OrlinNetworkSimplexStronglyPolynomial} in combination with the dynamic tree data structure due to \citet{TarjanNetworkSimplexStronglyPolynomial} and achieves a running time of $\mathcal{O}(nm \log n \min\{\log n\mathcal{C}, m \log n\})$. Recent experimental results and overviews about algorithms for solving the standard minimum cost flow problem can be found in \citep{Ahuja,KovacsMinCostFlowEvaluation,SifalerasMinimumCostNetworkFlows}.

	To the best of our knowledge, the budget-constrained minimum cost flow problem was first investigated by \citet{GloverConstrainedTransshipment}, who investigated a network simplex algorithm for singly constrained transshipment problems. The authors use the observation that the definition of a \emph{basis structure} or \emph{spanning tree structure} as it is used in network simplex methods (cf., e.g., \citep[p. 408]{Ahuja}) can be extended to the case of the budget-constrained minimum cost flow problem. This fact was later also mentioned by \citet[p. 460]{Ahuja}, which led to the development of the algorithm presented in this paper. \citet{MathiesConstrainedNetworkFlowProblems} study a combination of a specialized network simplex algorithm and the Lagrangian relaxation method applied to network flow problems with multiple constraints. \citet{SpaeltiSatellitePlacement} show that the satellite placement problem contains a special case of constrained network flow problems and develop a specialized network simplex algorithm.

	Combinatorial algorithms for the model that we use here were investigated in \citet{BudgetConstrainedMinCostFlows}, where a strongly polynomial-time algorithm based on the interpretation of the problem as a bicriteria minimum cost flow problem was derived. In \citet{BudgetConstrainedComplexityApproximability}, the authors further presented an efficient weakly polynomial-time combinatorial algorithm that performs worse only by a logarithmic factor than the best algorithm for the traditional minimum cost flow problem as well as fully polynomial-time approximation schemes for the problem.

	The special case of \emph{budget-constrained transportation problems} was studied by \citet{KlingmanConstrainedTransportation,KlingmanSinglyConstrainedTransportation}, who present specialized network simplex methods for the problem. The related \emph{budget-constrained maximum flow problem} was first studied by \citet{AhujaConstrainedMaxFlow}, who present a weakly polynomial-time algorithm for the problem that is based on a capacity scaling variant of the successive shortest path algorithm. \citet{CalicskanAhuja} later showed that this algorithm may not return a feasible solution in a specific special case and presented a corrected version of the algorithm. The same author also presented a double scaling algorithm, a network simplex algorithm, and a cost scaling algorithm for the budget-constrained maximum flow problem and evaluated their empirical performance (cf. \citep{CalicskanDoubleScaling,CalicskanNetworkSimplex,CalicskanCostScaling}). In particular, he could show that the cost scaling variant outperforms the other two scaling variants (including the capacity scaling algorithm of \citet{AhujaConstrainedMaxFlow}) and that the network simplex algorithm clearly outperforms all known algorithms for the problem (including CPLEX applied to the LP formulation). \citet{KrumkeCapacityUpgrade} study the problem of finding a maximum flow in the case that the capacity of each edge can be improved using a given budget.

	The specialized network simplex algorithm published by \citet{CalicskanNetworkSimplex} was developed independently of the work in this paper and uses similar ideas to the presented ones. However, there are theoretical gaps left that we close with this paper. In particular, we present a fully combinatorial description of the algorithm and investigate its theoretical running time with respect to the more general minimum cost flow variant of the problem. We prove optimality criteria that are based on \emph{two} different kinds of \emph{integral} node potentials and \emph{three} kinds of reduced costs. Moreover, we provide comprehensive techniques to prevent cycling that combine two common methods used in the case of the network simplex algorithm for the traditional minimum cost flow problem in a novel way. These techniques and the integrality of the node potentials are the key elements needed to prove a (pseudo-polynomial) running time of our procedure, which embodies the first proven running time for a network simplex algorithm both for the budget-constrained maximum flow and minimum cost flow problem. As it will be shown, this running time can be further improved by incorporating Dantzig's pivoting rule for choosing the edge that enters the basis. Finally, we briefly discuss the empirical performance of our method and provide experimental results that compare it both to the algorithm presented by \citet{CalicskanNetworkSimplex} and to the Gurobi Solver \citep{Gurobi}.

	\section{Notation and Definitions}

	In the budget-constrained minimum cost flow problem (abbreviated as \emph{\CMCFPC} in the following), we are given a directed multigraph~$G=(V,E)$ with edge capacities~$u_e \in \mathbb{N}_{\geq 0}$, costs~$c_e \in \mathbb{Z}$ (i.e., we allow integral costs with arbitrary sign), and usage fees~$b_e \in \mathbb{N}_{\geq 0}$ per unit of flow on the edges~$e \in E$, as well as a budget~$B \in \mathbb{N}_{\geq 0}$. The aim is to find a feasible flow~$x$ in $G$ that minimizes~$\sum_{e \in E} c_e \cdot x_e$ subject to the budget-constraint~$\sum_{e \in E} b_e \cdot x_e \leq B$. In particular, we do not assume supplies and demands to exist, but stick to an equivalent circulation based formulation in which the excess is required to be zero at each node in $V$. The problem~\CMCFPC can be stated as a linear program as follows:
	{\allowdisplaybreaks
	\begin{subequations}\label{eqn:CMCFP:LP}
	\begin{align}
		\min		&	\sum_{e \in E} c_e \cdot x_e \label{eqn:CMCFP:LP_Objective} \\
		\text{s.t.}	&	\sum_{e \in \delta^-(v)} x_e - \sum_{e \in \delta^+(v)} x_e = 0 \ \forall v \in V, \\
	                &	\sum_{e \in E} b_e \cdot x_e \leq B,	\label{eqn:CMCFP:LP_Budget} \\
					&	0 \leq x_e \leq u_e \ \ \forall e \in E.
	\end{align}
	\end{subequations}
	}%
	Here, we denote by $\delta^+(v)$ ($\delta^-(v)$) the set of \emph{outgoing} (\emph{incoming}) edges of some node~$v \in V$.
	The following assumption can be easily established by adding artificial edges with large costs and usage fees:

	\begin{assumption}\label{ass:CMCFPC:Simplex:StronglyConnected}
		The underlying graph~$G$ is strongly connected.
	\end{assumption}

	One consequence of Assumption~\ref{ass:CMCFPC:Simplex:StronglyConnected} is that we can assume that $n \in \mathcal{O}(m)$ holds in the following. Furthermore, note that, since the zero-flow is always feasible and has objective value zero in our model, the optimal objective value of each instance of \CMCFPC is always non-positive. The problem is a generalization of the problem variant in which a desired flow value~$F$ is given since we can ``enforce'' such a flow value by adding an edge with negative costs of large absolute value and capacity~$F$ (cf. \citep{BudgetConstrainedMinCostFlows} for further details).

	Consider a traditional minimum cost flow with respect to the costs~$c$ that is computed by some state-of-the-art algorithm, for example the \emph{enhanced capacity scaling algorithm} by \citet{OrlinEnhancedCapacityScaling} running in $\mathcal{O}(m \log n \cdot (m + n\log n))$~time. If the total usage fee of this flow fulfills $\sum_{e \in E} b_e \cdot x_e \leq B$, we have clearly found an optimal solution to the given instance of the budget-constrained minimum cost flow problem and are done. We are particularly interested in the converse case that the budget is exceeded for this flow, which we will assume in the following. Note that the usage fee amounts to at least $B+1$ in this case since the computed traditional minimum cost flow is integral without loss of generality (cf., e.g., \citep[p. 449]{Ahuja}) and since the usage fees are integral as well:

	\begin{assumption}\label{ass:CMCFPC:Simplex:InfeasibleMinCostFlow}
		There is a minimum cost flow~$x$ with respect to the costs~$c$ that fulfills $\sum_{e \in E} b_e \cdot x_e \geq B + 1$.
	\end{assumption}

	In the following, we give insights into the notion of \emph{basis structures} in the context of \CMCFPC. In contrast to the network simplex algorithm for the traditional minimum cost flow problem (cf. \citep[pp. 405--407]{Ahuja}), we need to drop the assumption that the subgraph that is induced by basic edges is cycle free. Instead, the basis contains a cycle with non-zero usage fees, as it will be shown in detail in the following.

	Consider an edge~$\eb \in E$ and a partition of the remaining edges in $E \setminus \{\eb\}$ into three sets~$L$, $T$, and $U$. Let the edges in $T$ form a spanning-tree of the underlying graph~$G$. Since there is a unique (undirected) path between any two nodes in the subgraph induced by~$T$, each edge~$e \in L \cup U \cup \{\eb\}$ closes a unique (undirected) cycle~$C(e)$ together with the edges in $T$. In the following, for each $e \in L \cup U \cup \{\eb\}$, let $C^+(e)$ ($C^-(e)$) denote the set of edges that are oriented in the same (opposite) direction as $e$ in $C(e)$. The costs and usage fees of this cycle are then given by $c(C(e)) \colonequals \sum_{e \in C^+(e)} c_e - \sum_{e \in C^-(e)} c_e$ and $b(C(e)) \colonequals \sum_{e \in C^+(e)} b_e - \sum_{e \in C^-(e)} b_e$, respectively.%\footnotemark\xspace.

	% \footnotetext{Note that the meaning of the variables~$C$, $U$ and $B$ is ambiguous. However, in order to comply with the common notations (cf., e.g., \citep{AhujaPaper,AhujaOrlinScalingNetworkSimplex}), we pinpoint that we use the definitions $C \colonequals \max_{e \in E} |c_e|$, $U \colonequals \max_{e \in E} u_e$, and $B \colonequals \max_{e \in E} b_e$ when making statements about \emph{time complexities}, but refer to cycles, the set of edges at their upper bound, and the budget, respectively, in all other cases.}

	In particular, note that the subgraph induced by the edges in $T \cup \{\eb\}$ contains a cycle~$C(\eb)$. We call such a tuple~$(L,T,U,\eb)$ a \emph{basis structure} of the budget-constrained minimum cost flow problem if $b(C(\eb)) \neq 0$. For a given basis structure~$(L,T,U,\eb)$, let $x$ denote a flow that fulfills~$x_e = 0$ for each $e \in L$, $x_e = u_e$ for each $e \in U$, and $b(x) \colonequals \sum_{e \in E} b_e \cdot x_e = B$ while maintaining flow conservation at each node~$v \in V$. We refer to $x$ as the \emph{basic solution} corresponding to $(L,T,U,\eb)$. We use the term \emph{basis structure} instead of just \emph{basis} in the following in order to emphasize that the tree and the edge~$\eb$ do not uniquely determine the corresponding basic solution (cf. \citet[Chapter 11]{Ahuja}).

	As shown in \citep{CalicskanNetworkSimplex}, the basic solution of each basis structure is uniquely defined and can be obtained in $\mathcal{O}(m)$~time: In a first step, we let $x_\eb = 0$ and determine the values of all edges in $T$ as it is done in the traditional network simplex method (cf. \citep[pp. 413--415]{Ahuja}). In a second step, the flow on the cycle~$C(\eb)$ is then increased (or decreased) until $b(x) = B$. In the following, we refer to a basis structure~$(L,T,U,\eb)$ as \emph{feasible} if the corresponding basic solution~$x$ is a feasible flow. In this case, we also refer to the flow~$x$ as a \emph{basic feasible flow}. The described procedure yields the following corollary:

	\begin{corollary}\label{cor:CMCFPC:Simplex:IfFractionalThenOnCycle}
		For each feasible basis structure~$(L,T,U,\eb)$ and its corresponding basic feasible flow~$x$, it holds that $x$ can be decomposed into two flows $x^I$ and $x^C$ such that $x = x^I + x^C$, where $x^I$ is integral and $x^C$ is non-zero (and possibly fractional) only on the edges of $C(\eb)$. \qed
	\end{corollary}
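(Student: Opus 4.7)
The plan is to read off the decomposition directly from the two-step construction of the basic solution that is described immediately before the corollary. Let $x^I$ denote the flow produced by the \emph{first} step: set $x^I_\eb \colonequals 0$, $x^I_e \colonequals 0$ for every $e \in L$, $x^I_e \colonequals u_e$ for every $e \in U$, and determine the values on the edges of $T$ by enforcing flow conservation at every node. Since removing $\eb$ from $T \cup \{\eb\}$ leaves a spanning tree, the standard argument for the traditional network simplex method (cf.\ \citep[pp.~413--415]{Ahuja}) shows that the tree values are uniquely determined by a leaf-stripping recursion on $T$. Because the capacities and the boundary values on $L \cup U \cup \{\eb\}$ are integral, this recursion keeps $x^I$ integral throughout.

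Next, define $x^C \colonequals x - x^I$. The \emph{second} step of the construction modifies $x^I$ only by pushing some scalar $\lambda \in \mathbb{R}$ of flow along the cycle $C(\eb)$ (adding $\lambda$ on edges in $C^+(\eb)$ and subtracting $\lambda$ on edges in $C^-(\eb)$) until $b(x) = B$ holds. Hence $x^C_e = 0$ for every $e \notin C(\eb)$, while on $C(\eb)$ the flow $x^C$ equals $\pm\lambda$ according to the orientation. In particular $x^C$ is supported on $C(\eb)$, as required, and it is a circulation (so $x = x^I + x^C$ satisfies flow conservation). Since $b(C(\eb)) \neq 0$ by definition of a basis structure, the value $\lambda = (B - b(x^I))/b(C(\eb))$ is well-defined but not necessarily integral, which is exactly why $x^C$ may be fractional.

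The only thing worth double-checking is that the \emph{first} step really does produce an integral $x^I$ whenever the tree $T$ together with the fixed values on $L \cup U \cup \{\eb\}$ admits a conservation-preserving completion. This is routine: processing $T$ from its leaves inward, at each step a single incident tree edge has its flow value forced by the already-determined integral contributions of its endpoint, so the computed value is an integer. With this observation in hand, the decomposition $x = x^I + x^C$ has all the claimed properties, and no further argument is needed.
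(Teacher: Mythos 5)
Your proposal is correct and follows exactly the argument the paper intends: the corollary is stated as an immediate consequence of the two-step construction of the basic solution (integral tree solution first, then a possibly fractional circulation on $C(\eb)$ to enforce $b(x)=B$), which is precisely the decomposition $x = x^I + x^C$ you write down. Your added check that the leaf-stripping recursion preserves integrality of $x^I$ is the only detail the paper leaves implicit, and it is handled correctly.
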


	Note that Corollary~\ref{cor:CMCFPC:Simplex:IfFractionalThenOnCycle} holds independently of whether the budget~$B$ is integral or not (this will be important in Section~\ref{sec:CMCFPC:Simplex:Termination}).

	As it is well known, we are able to restrict our considerations to such feasible basis structures and their corresponding basic feasible flows (cf. \citep[p. 460]{Ahuja}, \citep{GloverConstrainedTransshipment}):

	\begin{theorem}\label{thm:CMCFPC:Simplex:OptimalBasicFeasibleFlow}
		% Each instance of \CMCFPC admits an optimal basic feasible flow~$x^*$.
		For each instance of \CMCFPC, there always exists an optimal flow that is basic feasible. \qed
	\end{theorem}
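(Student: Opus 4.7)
The plan is to combine a standard LP vertex-existence argument with a structural translation of an optimal vertex into a basis structure $(L,T,U,\eb)$. The feasible region $P$ of~\eqref{eqn:CMCFP:LP} is bounded (by the capacities) and non-empty (the zero flow is feasible), so the LP attains its optimum at a vertex of $P$; the task is therefore to find an optimal vertex at which the budget constraint~\eqref{eqn:CMCFP:LP_Budget} is tight and then to read off a basis structure from it.

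First I would guarantee the existence of an optimal solution with $b(x)=B$. Let $x^*$ be any optimal vertex of $P$ with objective value $v^*$. If $b(x^*)<B$, then $x^*$ is still feasible for the LP obtained from~\eqref{eqn:CMCFP:LP} by dropping the budget constraint and must therefore be a traditional minimum cost flow with respect to $c$. Assumption~\ref{ass:CMCFPC:Simplex:InfeasibleMinCostFlow} furnishes another minimum cost flow $\tilde x$ with $b(\tilde x)\ge B+1$; since the set of minimum cost flows forms a face of the capacitated-flow polytope, every convex combination $x_\lambda:=(1-\lambda)x^*+\lambda\tilde x$ is again a minimum cost flow of value $v^*$, and by continuity a suitable $\lambda\in(0,1)$ achieves $b(x_\lambda)=B$. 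Consequently, $F:=\{x\in P:\sum_{e\in E}c_e x_e=v^*,\ b(x)=B\}$ is a non-empty bounded face of $P$ and hence contains a vertex $\hat x$ of $P$.

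Next I would convert $\hat x$ into a basis structure. At the vertex $\hat x$, exactly $m$ linearly independent constraints of~\eqref{eqn:CMCFP:LP} are tight. By strong connectivity (Assumption~\ref{ass:CMCFPC:Simplex:StronglyConnected}) the $n$ flow-conservation equalities contribute only $n-1$ to this count, and the budget row is tight by construction, so precisely $m-n$ capacity bounds must be tight and $n$ edges $T^*$ remain strictly between their bounds. Linear independence of the corresponding $n$ constraint-matrix columns forces the incidence-matrix columns indexed by $T^*$ to attain the maximum possible rank $n-1$, which means $T^*$ spans $V$ and is connected; thus $T^*$ consists of a spanning tree $T$ plus one additional edge $\eb$ that closes a unique cycle $C(\eb)$. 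Independence of the budget row from the incidence rows restricted to $T^*$, expressed via the signed incidence vector of $C(\eb)$, is equivalent to $b(C(\eb))\neq 0$. Setting $L:=\{e\in E\setminus\{\eb\}:\hat x_e=0\}$ and $U:=\{e\in E\setminus\{\eb\}:\hat x_e=u_e\}$ therefore yields a feasible basis structure whose uniquely determined basic feasible flow is precisely $\hat x$.

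The main obstacle is the first step: without Assumption~\ref{ass:CMCFPC:Simplex:InfeasibleMinCostFlow} the theorem would fail, since an optimal vertex with slack budget corresponds to a classical spanning-tree basis rather than a basis structure in the paper's sense, and one must move along the face of minimum cost flows toward a budget-violating flow to force the budget to become active. The subsequent structural step reduces to a rank count in which the budget row contributes the extra degree of freedom that allows $T^*$ to contain one cycle instead of being a spanning tree.
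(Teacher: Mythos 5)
The paper itself offers no proof of this theorem --- it closes it with \qed and a citation to \citet[p.~460]{Ahuja} and \citet{GloverConstrainedTransshipment} --- so your polyhedral argument is a legitimate attempt to supply one, and its first half is sound: under Assumption~\ref{ass:CMCFPC:Simplex:InfeasibleMinCostFlow} you can move along the optimal face of the unconstrained problem from an optimal vertex with slack budget towards a minimum cost flow with $b(\tilde x)\ge B+1$, making the budget tight without losing optimality, and the resulting set $F$ is an intersection of faces of $P$, hence a face, hence contains a vertex $\hat x$ of $P$. You are also right that Assumption~\ref{ass:CMCFPC:Simplex:InfeasibleMinCostFlow} is what makes the statement true at all, since a basic solution in the paper's sense must satisfy $b(x)=B$.

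The gap is in the second half, at the sentence asserting that ``precisely $m-n$ capacity bounds must be tight and $n$ edges $T^*$ remain strictly between their bounds.'' At a vertex the tight constraints have rank $m$, but nothing prevents strictly more than $m-n$ bound constraints from being tight (degeneracy). In that case $T^*$ has fewer than $n$ edges, need not be connected or spanning, and need not contain a cycle at all: $\hat x$ can be integral with its free edges forming a forest even though $b(\hat x)=B$. What the rank condition actually yields is only that the columns indexed by the free edges are independent, i.e.\ that the free edges contain at most one cycle $C$ and, if such a cycle exists, $b(C)\neq 0$. To finish, you must complete this subgraph to a basis structure: extend the forest part to a spanning tree $T$, take $\eb$ to be the free cycle-closing edge if one exists and otherwise some non-tree edge with $b(C(\eb))\neq 0$, and distribute the remaining tight edges into $L$ and $U$ according to which bound they attain. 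The existence of such an $\eb$ in the degenerate case does not follow from your rank count; it follows from Assumption~\ref{ass:CMCFPC:Simplex:InfeasibleMinCostFlow}, since if every fundamental cycle of $T$ had zero usage fee then $b$ would be induced by node potentials and every circulation would satisfy $b(x)=0$, contradicting the existence of a feasible flow with $b(x)\ge B+1$. With that completion the basic solution of $(L,T,U,\eb)$ is still $\hat x$ (it satisfies the defining conditions and the basic solution is unique), so the argument closes --- but the degenerate case is exactly where your count breaks, and it must be handled explicitly.
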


	As in the traditional network simplex algorithm, we associate \emph{node potentials} with each node~$v \in V$ in order to be able to check optimality quickly. However, since we are dealing with two kinds of costs, we maintain \emph{two} different node potentials~$\pi$ and $\mu$ that are defined with respect to the edge costs~$c_e$ and the usage fees~$b_e$, respectively. In particular, we define $\pi_{v_r} \colonequals 0$ and $\mu_{v_r} \colonequals 0$ for some arbitrary but fixed node~$v_r \in V$, which we select as the root of the spanning tree. We choose the node potentials~$\pi$ and $\mu$ in a way such that the \emph{reduced costs} $\c_e \colonequals c_e - \pi_v + \pi_w$ and $\b_e \colonequals b_e - \mu_v + \mu_w$ are zero for each edge~$e=(v,w) \in T$. With this restriction, the node potentials at each node~$v \in V$ are uniquely defined and can be computed in $\mathcal{O}(n)$~time (cf. \citep[pp. 411--412]{Ahuja} for further details).

	Note that the costs~$c(C)$ and usage fees~$b(c)$ of any cycle~$C$ equal its reduced costs~$\c(C) \colonequals \sum_{e \in C} \c_e$ and reduced usage fees~$\b(C) \colonequals \sum_{e \in C} \b_e$, respectively, since the node potentials cancel out in a circular fashion (cf. \citep[pp. 43--44]{Ahuja}). Consequently, since the reduced costs are zero for each edge in $T$, it holds that $c(C(e)) = \c_e$ and $b(C(e)) = \b_e$ for each cycle~$C(e)$ that is closed by some edge~$e \notin T$.

	% Note that, for any edge~$e \notin T$, the costs and usage fees of the cycle~$C(e)$ are given by
	% \begin{align*}
	% 	c(C(e)) &= \sum_{e' \in C^+(e)} c_{e'} - \sum_{e' \in C^-(e)} c_{e'} \\
	% 	&= \left(\sum_{e'=(v,w) \in C^+(e)} c_{e'} - \pi_v + \pi_w \right) \\
	% 	&\hspace{0.5cm}- \left(\sum_{e'=(v,w) \in C^-(e)} c_{e'} - \pi_v + \pi_w \right) \\
	% 	&= \sum_{e' \in C^+(e)} \c_{e'} - \sum_{e' \in C^-(e)} \c_{e'} = \c_e
	% \intertext{and}
	% 	b(C(e)) &= \sum_{e' \in C^+(e)} \b_{e'} - \sum_{e' \in C^-(e)} \b_{e'} = \b_e,
	% \end{align*}
	% respectively, since $\c_{e'} = \b_{e'} = 0$ for each $e' \in T$ and since $e \in C^+(e)$. Thus, both values~$c(C(e))$ and $b(C(e))$ can be computed in constant time once the node potentials are known.

	For a given basis structure~$(L,T,U,\eb)$, we additionally assign a third kind of reduced costs~$\d_e$ to each edge~$e \in E$ in order to be able to decide if a basic feasible flow is optimal or to detect an edge that is able to improve the objective function\footnote{Remember that $\b_\eb = b(C(\eb)) \neq 0$ in any basis structure}:
	\begin{align}
		\d_e \colonequals \c_e - \c_\eb \cdot \frac{\b_e}{\b_\eb}.  \label{eqn:CMCFPC:Simplex:DefReducedCosts}
	\end{align}
	Intuitively, the reduced costs~$\d_e$ describe the effect that an increase of the flow on $C(e)$ by one unit and a decrease of the flow on $C(\eb)$ by $\frac{\b_e}{\b_\eb}$ units has on the objective function value. This will be shown in the following section. Note that $\d_e = 0$ for each $e \in T \cup \{\eb\}$ since $\c_e = 0$ and $\b_e = 0$ for each $e \in T$ and $\d_\eb = \c_\eb - \c_\eb \cdot \frac{\b_\eb}{\b_\eb} = 0$.

	\section{Network Simplex Pivots}
	\label{sec:CMCFPC:Simplex:Pivots}

	Before we describe the network simplex pivot in the case of \CMCFPC, it is useful to first recall the basic outline of the corresponding procedure in the case of the traditional network simplex algorithm: For a given basis structure~$(L,T,U)$ that consists of a set of edges~$L$ at their lower bound, a spanning tree~$T$, and a set of edges~$U$ at their upper bound, assume that there is an edge~$e \in L$ with negative reduced costs. Adding this \emph{entering edge}~$e$ to the spanning tree~$T$ closes a unique cycle~$C(e)$ with negative costs. By sending flow on $C(e)$ in the direction of $e$, we can, thus, improve the objective function value until, for some flow value~$\delta$, some \emph{leaving edge}~$e' \in C(e)$ reaches its lower or upper bound. By assigning this edge~$e'$ to $L$ or $U$, respectively, we obtain a new basis structure. This operation (adding an edge to $T$, sending flow on the cycle, removing one edge from the cycle) is called a \emph{simplex pivot}. One distinguishes between a \emph{degenerate simplex pivot} if $\delta = 0$ and a \emph{non-degenerate simplex pivot} if $\delta > 0$. Note that the objective function value does not increase during a simplex pivot, but only decreases strictly in the case of a non-degenerate simplex pivot.

	Now, for a given instance of \CMCFPC, let $(L,T,U,\eb)$ and $x$ denote a feasible basis structure and its basic feasible flow, respectively, and let $\pi$ and $\mu$ denote the corresponding node potentials. Assume that there is an edge~$e \in L$ with negative reduced costs~$\d_e < 0$. We show that we do not increase the objective function value if we add the then called \emph{entering edge}~$e$ to $T$ (which closes a new cycle~$C(e)$ together with the edges in $T$) and by sending suitable amounts of flow on \emph{both} of the cycles~$C(e)$ and $C(\eb)$ until the flow value on at least one \emph{leaving edge}~$e' \in T \cup \{e, \eb\}$ becomes equal to zero or $u_{e'}$. In this case, we can obtain a new basis structure~$(L',T',U',\eb')$ and continue the procedure.

	For some value~$\delta \geq 0$, let $x'$ be the flow defined as
	\begin{align}
		x' \colonequals x + \delta \cdot \chi(C(e)) - \delta \cdot \frac{\b_e}{\b_\eb} \cdot \chi(C(\eb)),   \label{eqn:CMCFPC:Simplex:SimplexUpdateL}
	\end{align}
	where, for any cycle~$C$ with forward edges~$C^+$ and backward edges~$C^-$, the flow $\chi(C)$ is defined as
	\begin{align*}
		(\chi(C))_e \colonequals \begin{cases}
			1, & \text{if } e \in C^+,\\
			-1, & \text{if } e \in C^-, \\
			0, & \text{else.}
		\end{cases}
	\end{align*}
	The new flow~$x'$ fulfills $b(x') = B$, since
	\begin{align*}
		b(x')	&= b(x) + \delta \cdot b(\chi(C(e))) \\
				&\hspace{0.5cm} - \delta \cdot \frac{\b_e}{\b_\eb} \cdot b(\chi(C(\eb))) \\
				&= b(x) + \delta \cdot b(C(e)) - \delta \cdot \frac{\b_e}{\b_\eb} \cdot b(C(\eb)) \\
				&= b(x) + \delta \cdot \left(\b_e - \frac{\b_e}{\b_\eb} \cdot \b_\eb \right) = b(x) = B.
	\end{align*}
	Moreover, it holds that
	\begin{align*}
		c(x')	&= c(x) + \delta \cdot c(\chi(C(e))) \\
				&\hspace{0.5cm} - \delta \cdot \frac{\b_e}{\b_\eb} \cdot c(\chi(C(\eb))) \\
				&= c(x) + \delta \cdot c(C(e)) - \delta \cdot \frac{\b_e}{\b_\eb} \cdot c(C(\eb)) \\
				&= c(x) + \delta \cdot \left(\c_e - \frac{\b_e}{\b_\eb} \cdot \c_\eb \right) \\
				&= c(x) + \underbrace{\delta}_{\geq 0} \cdot \underbrace{\d_e}_{< 0} \leq c(x). 
	\end{align*}
	By sending a small amount of $\delta \geq 0$~units of flow on $C(e)$ and $-\frac{\b_e}{\b_\eb} \cdot \delta$~units of flow on $C(\eb)$, we do not increase the objective value while maintaining feasibility. In fact, if we can choose a positive value for $\delta$, the objective value strictly decreases. Let $\theta_{e'} \colonequals (\chi(C(e)))_{e'} - \frac{\b_e}{\b_\eb} \cdot (\chi(C(\eb)))_{e'}$ denote the effect that an augmentation of one unit of flow on $C(e)$ and $-\frac{\b_e}{\b_\eb}$~units of flow on $C(\eb)$ has on edge~$e' \in E$. Moreover, let $\delta$ be defined as $\delta \colonequals \min_{e' \in E} \delta_{e'}$ with
	\begin{align*}
		\delta_{e'} &\colonequals \begin{cases}
			-\frac{x_{e'}}{\theta_{e'}} & \text{if } \theta_{e'} < 0, \\
			\frac{u_{e'} - x_{e'}}{\theta_{e'}} & \text{if } \theta_{e'} > 0, \\
			+\infty & \text{else}.
		\end{cases}
	\end{align*}
	Hence, by sending $\delta$ units of flow on $C(e)$ and $-\frac{\b_e}{\b_\eb}$~units of flow on $C(\eb)$, we maintain feasibility of the flow. Moreover, by the definition of $\delta$, there are several \emph{blocking edges}~$e'$ contained in $C(e)$ or $C(\eb)$ (or both) that fulfill~$\delta_{e'} = \delta$. We choose one of these blocking edges as the \emph{leaving edge}~$e'$, which consequently fulfills $x'_{e'} = 0$ or $x'_{e'} = u_{e'}$. We distinguish three cases:
	\begin{itemize}
		\item If $e' = e$, we can simply remove $e$ from $L$ and assign it to $U$. The new basis structure~$(L',T',U',\eb') \colonequals (L \setminus \{e\}, T, U \cup \{e\}, \eb)$ is then feasible again.
		\item If $e'=\eb$, we obtain a new basis structure by setting $(L',T',U',\eb') \colonequals (L \cup \{e'\} \setminus \{e\}, T, U, e)$ or $(L',T',U',\eb') \colonequals (L \setminus \{e\}, T, U \cup \{e'\}, e)$, depending on whether $x'_{e'} = 0$ or $x'_{e'} = u_{e'}$, respectively.
		\item Otherwise, we remove $e$ from $L$ and assign it to $T$. Moreover, we remove $e'$ from $T$ and assign it to $L$ or $U$, depending on whether $x'_{e'} = 0$ or $x'_{e'} = u_{e'}$, respectively, which yields the new basis structure~$(L',T',U',\eb') \colonequals (L \cup \{e'\} \setminus \{e\}, T \cup \{e\} \setminus \{e'\}, U, \eb)$ or $(L',T',U',\eb') \colonequals (L \setminus \{e\}, T \cup \{e\} \setminus \{e'\}, U \cup \{e'\}, \eb)$, respectively. Furthermore, for the case that $\eb$ is no longer contained in a cycle in $T' \cup \{\eb\}$, we assign $\eb$ to $T'$, remove $e$ from $T'$, and set $\eb' \colonequals e$.
	\end{itemize}
	In any case, we maintain a spanning tree~$T$ and ensure that $\eb$ closes a cycle with the edges in $T$. As in the traditional network simplex algorithm, we refer to such a step as a \emph{simplex pivot}. This pivot step is called \emph{degenerate} if $\delta=0$ and \emph{non-degenerate} else. In the former case, we refer to those edges with $\delta_e = 0$ as \emph{degenerate edges}. Note that the objective function strictly decreases only in the case of a non-degenerate simplex pivot.

	The case that there is an edge~$e \in U$ with $\d_e > 0$ is similar to the above case. By decreasing the flow on $C(e)$ by $\delta$~units and increasing the flow on $C(\eb)$ by $\frac{\b_e}{\b_\eb}$~units, i.e., by setting
	\begin{align}
		x' \colonequals x - \delta \cdot \chi(C(e)) + \delta \cdot \frac{\b_e}{\b_\eb} \cdot \chi(C(\eb)),   \label{eqn:CMCFPC:Simplex:SimplexUpdateU}
	\end{align}
	we maintain feasibility and improve the objective function for the case that $\delta > 0$.

	In the above discussion, we have assumed that $(L,T,U,\eb)$ is a (feasible) basis structure, which includes that $\b_\eb \neq 0$, i.e., that the usage fee of the cycle~$C(\eb)$ is non-zero. As it turns out, the usage fee of $C(\eb)$ remains non-zero after a simplex pivot as long as it was non-zero before the step, as it is shown in the following lemma:

	\begin{lemma}\label{lem:CMCFPC:Simplex:NoZeroCycle}
		Assume that $(L,T,U,\eb)$ is a feasible basis structure and let $\pi$ and $\mu$ denote the corresponding node potentials. Then the tuple~$(L',T',U',\eb')$ that results from a simplex pivot is a feasible basis structure again.
	\end{lemma}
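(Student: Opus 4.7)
The plan is to verify the three defining properties of a feasible basis structure for $(L', T', U', \eb')$: (a) $T'$ is a spanning tree of $G$; (b) the cycle $C'(\eb')$ that $\eb'$ closes together with $T'$ satisfies $\b(C'(\eb')) \neq 0$; and (c) the associated basic solution is feasible. Property (c) is immediate from the construction, since $\delta$ was chosen as a minimum ratio (so $0 \leq x'_e \leq u_e$ everywhere) and the identity $b(x') = B$ was verified in the calculation preceding the lemma. For (a), I will go through the three pivot cases; the only subtle instance is Case~3 when $e' \in C(\eb) \setminus C(e)$, where the naive set $T \cup \{e\} \setminus \{e'\}$ is disconnected and the construction instead keeps $e$ outside the tree, swaps $\eb$ in, and sets $\eb' \colonequals e$. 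Here $T \cup \{\eb\} \setminus \{e'\}$ is a spanning tree because $\eb$ is a chord of $C(\eb)$ bridging the two components created by removing $e'$.

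The main obstacle is property (b). The key observation I will use throughout is that any blocking leaving edge $e'$ must satisfy $\theta_{e'} \neq 0$, since $\theta_{e'} = 0$ would force $\delta_{e'} = +\infty$ and preclude $e'$ from being selected. Case~1 is trivial: $C'(\eb') = C(\eb)$, hence $\b(C'(\eb')) = \b_\eb \neq 0$. In Case~2, the fact that $\eb \notin C(e)$ gives $\theta_\eb = -(\b_e/\b_\eb)(\chi(C(\eb)))_\eb$, and $\theta_\eb \neq 0$ therefore forces $\b_e \neq 0$; since $C'(\eb') = C(e)$, we obtain $\b(C'(\eb')) = \b_e \neq 0$. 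The Case~3 sub-case $e' \in C(\eb) \setminus C(e)$ is handled identically: $C'(\eb') = C(e)$ (because the fundamental path of $e$ in $T$ avoids $e'$ and therefore persists in $T'$), and $\theta_{e'} \neq 0$ again yields $\b_e \neq 0$. Symmetrically, the sub-case $e' \in C(e) \setminus C(\eb)$ is trivial since $\eb$'s fundamental path does not use $e'$, so $C'(\eb') = C(\eb)$.

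The most delicate sub-case is Case~3 with $e' \in C(e) \cap C(\eb) \cap T$, where the new cycle $C'(\eb)$ is a genuine combination of $C(e)$ and $C(\eb)$. I plan to work in the two-dimensional cycle space of $T \cup \{e,\eb\}$, which is spanned by $\chi(C(e))$ and $\chi(C(\eb))$. Writing $\chi(C'(\eb)) = a\,\chi(C(\eb)) + b\,\chi(C(e))$, the orientation convention (with $\eb$ oriented forward in $C'(\eb)$) fixes $a = 1$; the requirement that the coefficient on $e'$ vanish, i.e., $\beta + \alpha b = 0$ for $\alpha \colonequals (\chi(C(e)))_{e'}$ and $\beta \colonequals (\chi(C(\eb)))_{e'}$, then forces $b = -\alpha\beta$. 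Applying the $\b$-linear functional yields $\b(C'(\eb)) = \b_\eb - \alpha\beta\,\b_e = \alpha\,\b_\eb\,\theta_{e'}$, which is nonzero because $\b_\eb \neq 0$ and $\theta_{e'} \neq 0$. Combining all sub-cases establishes (b) and completes the proof.
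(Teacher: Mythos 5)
Your proof is correct, and it reaches the conclusion by a genuinely different (and arguably cleaner) route than the paper's. The paper argues by contradiction: assuming $b^{\mu'}_{\o{e}'} = 0$, it distinguishes whether $C(e)$ and $C(\eb)$ are edge-disjoint or share a common simple path~$P_0$, shows that the leaving edge must lie on $P_0$ in the latter case, decomposes the two cycles into $P(e)$, $P(\eb)$, and $P_0$, derives $\b_e = \b_\eb$, and concludes that the flow on $e'$ would then not change --- contradicting that $e'$ is a blocking edge. You instead organize the cases by the identity and location of the leaving edge and argue directly, reducing everything to the single identity $b(C'(\eb')) = \alpha\,\b_\eb\,\theta_{e'}$ in the only nontrivial sub-case, obtained from the two-dimensional cycle space of $T \cup \{e,\eb\}$; the observation that any blocking edge has $\theta_{e'} \neq 0$ then closes every case at once. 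Both proofs ultimately rest on the same fact (the usage fee of the new cycle is a nonzero multiple of $\theta_{e'}$), but your cycle-space computation dispenses with the structural discussion of $P_0$ entirely, and you additionally verify the spanning-tree repair in the sub-case $e' \in C(\eb) \setminus C(e)$ explicitly, which the paper's pivot description only sketches. The one step worth spelling out is that $\delta < +\infty$ always holds --- the entering edge itself satisfies $\theta_e = \pm 1$ since it lies on $C(e)$ but not on $C(\eb)$ --- so a leaving edge with finite $\delta_{e'}$, and hence $\theta_{e'} \neq 0$, is guaranteed to exist.
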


	\begin{proof}
		Let $e$ denote the entering edge in the simplex pivot that leads to the new basis structure~$(L',T',U',\eb')$. As it was shown above, the flow that is induced by this new basis structure is feasible, again. Now assume for the sake of contradiction that $b^{\mu'}_{\o{e}'} = 0$.

		First, consider the case that the two cycles~$C(e)$ and $C(\eb)$ are edge-disjoint. Clearly, since either one of the edges in $C(e)$ or one of the edges in $C(\eb)$ becomes the leaving edge, one of the two cycles remains in $T' \cup \{\eb'\}$. Since $b(C(\eb)) = \b_\eb \neq 0$ by assumption, it must hold that $b(C(e)) = \b_e = 0$. However, in this case, the flow on the cycle~$C(\eb)$ does not change according to equations~\eqref{eqn:CMCFPC:Simplex:SimplexUpdateL} and \eqref{eqn:CMCFPC:Simplex:SimplexUpdateU}, i.e., the leaving edge must be contained in $C(e)$ and the cycle~$C(\eb)$ remains in $T' \cup \{\eb'\}$, which contradicts the assumption that $b^{\mu'}_{\o{e}'} = 0$.

		Now assume that the two cycles~$C(e)$ and $C(\eb)$ are not edge-disjoint. It is easy to see, that there is exactly one (undirected) simple path~$P_0$ that is contained in both $C(e)$ and $C(\eb)$ and that, consequently, contains neither $e$ nor $\eb$ (cf. \citep{CalicskanNetworkSimplex}). The leaving edge~$e'$ must then be contained in $P_0$, which can be seen as follows: For the case that $\b_e \neq 0$, none of the two cycles~$C(e)$ and $C(\eb)$ can still be contained in $T' \cup \{\eb'\}$ (otherwise, it would again hold that $b^{\mu'}_{\o{e}'} \neq 0$), i.e., the leaving edge must be a common edge of the two cycles, which lies on $P_0$. Else, if $\b_e = 0$, the leaving edge~$e'$ must be contained in $C(e)$ since the flow does not change on $C(\eb)$ as shown above. If $e'$ was not contained in $C(\eb)$ as well, the cycle~$C(\eb)$ would continue to exist in $T' \cup \{\eb'\}$, which, again, contradicts the assumption that $b^{\mu'}_{\o{e}'} = 0$. Hence, we obtain that $e' \in P_0$. Let $v$ and $w$ denote the end nodes of $P_0$ and let $P(e)$ and $P(\eb)$ denote the unique paths that connect $w$ with $v$ in $C(e) \setminus P_0$ and $C(\eb) \setminus P_0$, respectively. For some fixed orientation of $P(e)$ and $P(\eb)$, the new cycle~$C(\eb')$ is the concatenation of $P(e)$ and the reversal of $P(\eb)$ (cf. Figure~\ref{fig:CMCFPC:Simplex:NoZeroCycle}). Since $b^{\mu'}_{\o{e}'} = 0$ by assumption, it holds that $b(P(e)) = b(P(\eb))$, which implies that
		\begin{align*}
			\b_e &= b(C(e)) = b(P(e)) + b(P_0) \\
			&= b(P(\eb)) + b(P_0) = b(C(\eb)) = \b_\eb.
		\end{align*}
		However, according to equations~\eqref{eqn:CMCFPC:Simplex:SimplexUpdateL} and \eqref{eqn:CMCFPC:Simplex:SimplexUpdateU}, this implies that the flow on $x_{e'}$ remains unchanged, which contradicts the assumption that $e'$ is the leaving edge.
		\begin{figure}[ht!]
			\centering
			\begin{tikzpicture}[>=stealth',->]
				\draw[->,thick,xscale=1.5] (0,2) arc(45:307:1.4);
				\draw[->,thick,xscale=-1.5] (0,2) arc(45:307:1.4);
				\node[knoten] (V) at (0,0) {$v$};
				\node[knoten] (W) at (0,2) {$w$};
				\path[decoration={zigzag,amplitude=0.8mm,segment length=3mm}] (V) edge[decorate] node[right] {$P_0$} (W);
				\node at (-1.5,1) {$C(\eb)$};
				\node at (1.5,1) {$C(e)$};
				\node at (-2.5,-0.6) {$P(\eb)$};
				\node at (2.5,-0.6) {$P(e)$};
			\end{tikzpicture}
			\caption[The situation if the two cycles~$C(e)$ and $C(\eb)$ are not edge-disjoint.]{The situation if the two cycles~$C(e)$ and $C(\eb)$ are not edge-disjoint. Since the leaving edge~$e'$ lies on the path~$P_0$, the resulting cycle (thick) is the concatenation of $P(\eb)$ and $P(e)$.}
			\label{fig:CMCFPC:Simplex:NoZeroCycle}
		\end{figure}
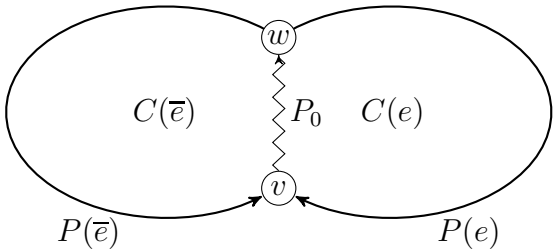
	\qed\end{proof}

	\section{Optimality Conditions}
	\label{sec:CMCFPC:Simplex:Optimality}

	The above discussion shows that, whenever we encounter an edge~$e \in L$ with $\d_e < 0$ or an edge~$e \in U$ with $\d_e > 0$, we can perform a simplex pivot and improve the objective function value (in the case that the pivot is non-degenerate). Conversely, as it turns out, whenever there are no such edges in a feasible basis structure, the corresponding feasible basic flow is an optimal solution to the given instance of \CMCFPC. In order to prove this result, we need the following lemma:

	\begin{lemma}\label{lem:CMCFPC:Simplex:OptimizationEquivalent}
		Let $(\pi,\mu)$ denote the node potentials corresponding to the basis structure~$(L,T,U,\eb)$. Any flow~$\o{x}$ with $b(\o{x}) = B$ is optimal if and only if it is optimal with respect to the objective function~$\d(x) \colonequals \sum_{e \in E} \d_e \cdot x_e$.
	\end{lemma}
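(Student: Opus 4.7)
The plan is to establish a constant-offset identity between $\d$ and $c$ on the affine slice of feasible circulations satisfying $b(x) = B$, from which the equivalence of optimality follows immediately. In other words, I will reduce the lemma to the computation that, restricted to $\{x : b(x)=B\}$, the two objectives differ only by an additive constant, and hence share the same minimizers.

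First, I would observe that for any circulation $x$ (i.e., any $x$ satisfying flow conservation at every node), the node potentials telescope and cancel out, giving
\begin{align*}
\sum_{e \in E} \c_e \cdot x_e = c(x), \qquad \sum_{e \in E} \b_e \cdot x_e = b(x).
\end{align*}
This is the standard argument that, at each node $v$, the contribution $+\pi_v$ picked up on every incoming edge cancels the contribution $-\pi_v$ on every outgoing edge (weighted by $x_e$, this sum vanishes by flow conservation), and analogously for $\mu$. Substituting the definition $\d_e = \c_e - (\c_\eb/\b_\eb)\,\b_e$ into the linear form $\d(x) = \sum_{e \in E} \d_e \cdot x_e$ and applying both identities yields
\begin{align*}
\d(x) = c(x) - \frac{\c_\eb}{\b_\eb}\cdot b(x)
\end{align*}
for every feasible circulation $x$; note that this is well-defined since $\b_\eb \neq 0$ by the definition of a basis structure.

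For any $\o{x}$ with $b(\o{x}) = B$ and any competing feasible flow $x$ with $b(x) = B$, the identity specializes to $\d(\o{x}) - \d(x) = c(\o{x}) - c(x)$, so $\o{x}$ minimizes $c$ over this slice if and only if it minimizes $\d$ over the same slice. Both directions of the equivalence stated in the lemma follow at once.

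The computation itself is essentially a one-line substitution, so there is no genuine obstacle. The only point that deserves care is the cancellation of node potentials: it is crucial that one argues with a general circulation, not with a particular basis structure, and that the identities for $\c$ and $\b$ are established simultaneously before substituting into~\eqref{eqn:CMCFPC:Simplex:DefReducedCosts}. Also note that no appeal to Assumption~\ref{ass:CMCFPC:Simplex:InfeasibleMinCostFlow} is needed here; that assumption is what guarantees elsewhere that the slice $b(x) = B$ contains an optimum of the original problem, which is why the lemma's restriction to $b(\o{x}) = B$ is without loss of generality for the algorithm.
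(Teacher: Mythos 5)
Your proposal is correct and follows essentially the same route as the paper: both substitute the definition of $\d_e$ into $\d(\o{x})$, use the standard cancellation of node potentials on circulations to replace $\c$ and $\b$ by $c$ and $b$, and conclude that on the slice $b(\o{x})=B$ the two objectives differ by the constant $-\frac{\c_\eb}{\b_\eb}\cdot B$, hence share the same minimizers. No gaps.
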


	\begin{proof}
		Using equation~\eqref{eqn:CMCFPC:Simplex:DefReducedCosts} and the fact that $\sum_{e \in E} \c_e \cdot \o{x}_e = \sum_{e \in E} c_e \cdot \o{x}_e$ and $\sum_{e \in E} \b_e \cdot \o{x}_e = \sum_{e \in E} b_e \cdot \o{x}_e$ (cf., e.g., \citep[pp. 43--44]{Ahuja}), we get that the flow~$\o{x}$ fulfills the following property:
		\begin{align}
			\sum_{e \in E} \d_e \cdot \o{x}_e &= \sum_{e \in E} \left( \c_e - \c_\eb \cdot \frac{\b_e}{\b_\eb} \right) \cdot \o{x}_e \nonumber\\
			&= \sum_{e \in E} \c_e \cdot \o{x}_e - \frac{\c_\eb}{\b_\eb} \cdot \sum_{e \in E} \b_e \cdot \o{x}_e \nonumber \\
			&= \sum_{e \in E} c_e \cdot \o{x}_e - \frac{\c_\eb}{\b_\eb} \cdot \sum_{e \in E} b_e \cdot \o{x}_e \nonumber\\
			&= \sum_{e \in E} c_e \cdot \o{x}_e - \frac{\c_\eb}{\b_\eb} \cdot B. \label{eqn:CMCFPC:Simplex:OptimizationEquivalent}
		\end{align}
		Note that the value~$- \frac{\c_\eb}{\b_\eb} \cdot B$ only depends on the basis structure and is independent from the flow~$\o{x}$. Hence, for any feasible flow~$\o{x}$ with $b(\o{x}) = B$, the objective function values~$\d(\o{x})$ and $c(\o{x})$ only differ by a constant additive value, which shows the claim.
	\qed\end{proof}

	\begin{theorem}
		For a feasible basis structure~$(L,T,U,\eb)$ and the corresponding node potentials~$\pi$ and $\mu$, assume that the reduced costs~$\d$ fulfill the following conditions:
		{\begin{subequations}
		\begin{align}
			\d_e \geq 0 & \text{ for all } e \in L, \\
			\d_e = 0 & \text{ for all } e \in T \cup \{\eb\}, \\
			\d_e \leq 0 & \text{ for all } e \in U.
		\end{align}
		\end{subequations}}%
		Then the corresponding basic feasible flow~$x^*$ is optimal.
	\end{theorem}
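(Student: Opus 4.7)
The plan is to combine Theorem~\ref{thm:CMCFPC:Simplex:OptimalBasicFeasibleFlow} and Lemma~\ref{lem:CMCFPC:Simplex:OptimizationEquivalent} with a direct edge-by-edge sign analysis of the reduced costs~$\d$. Theorem~\ref{thm:CMCFPC:Simplex:OptimalBasicFeasibleFlow} guarantees that some optimum is basic feasible, and every basic feasible flow~$\bar{x}$ satisfies $b(\bar{x})=B$ by construction; hence the infimum of $c$ over the whole feasible region is already attained on the subset $X_B\colonequals\{\bar{x}\text{ feasible}:b(\bar{x})=B\}$. It therefore suffices to show that $x^*$ minimizes $c$ over $X_B$, and by Lemma~\ref{lem:CMCFPC:Simplex:OptimizationEquivalent} this reduces further to showing that $x^*$ minimizes the reduced-cost objective $\d(\cdot)$ over $X_B$, since $c$ and $\d$ differ there only by the constant $-\c_{\eb}B/\b_{\eb}$.

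For the heart of the proof I would fix an arbitrary $\bar{x}\in X_B$ and split
\[
\d(\bar{x})-\d(x^*)\;=\;\sum_{e\in E}\d_e\,(\bar{x}_e-x^*_e)
\]
along the partition $E=L\cup T\cup U\cup\{\eb\}$. Edges in $T\cup\{\eb\}$ contribute $0$ because $\d_e=0$; each $e\in L$ contributes $\d_e\,\bar{x}_e\geq 0$, since $x^*_e=0\leq\bar{x}_e$ and $\d_e\geq 0$ by hypothesis; and each $e\in U$ contributes $\d_e\,(\bar{x}_e-u_e)\geq 0$, since $x^*_e=u_e\geq\bar{x}_e$ and $\d_e\leq 0$ by hypothesis. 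Summing the three contributions gives $\d(\bar{x})\geq\d(x^*)$, so Lemma~\ref{lem:CMCFPC:Simplex:OptimizationEquivalent} yields $c(\bar{x})\geq c(x^*)$ on $X_B$, and by the reduction above this inequality then extends to every feasible flow.

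The argument is essentially the standard LP-simplex optimality proof translated into the two reduced-cost systems~$\c$ and~$\d$, and I do not anticipate a genuine obstacle. The one subtlety worth flagging explicitly is the bridge from optimality over $X_B$ to optimality over the entire feasible region: this step depends on Theorem~\ref{thm:CMCFPC:Simplex:OptimalBasicFeasibleFlow}, which itself leans on Assumption~\ref{ass:CMCFPC:Simplex:InfeasibleMinCostFlow}. Without invoking that chain, Lemma~\ref{lem:CMCFPC:Simplex:OptimizationEquivalent} alone would only certify optimality among feasible flows whose usage fee is exactly~$B$, leaving the case $b(\bar{x})<B$ open.
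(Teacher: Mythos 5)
Your proof is correct and follows essentially the same route as the paper: reduce to the objective~$\d$ via Lemma~\ref{lem:CMCFPC:Simplex:OptimizationEquivalent} and then carry out the sign analysis of $\d_e\,(\bar{x}_e - x^*_e)$ over the partition $L$, $T\cup\{\eb\}$, $U$. The one place where you are more careful than the paper is the bridge from flows with $b(\bar{x})=B$ to arbitrary feasible flows: the paper applies Lemma~\ref{lem:CMCFPC:Simplex:OptimizationEquivalent} to an ``arbitrary feasible flow'' without comment, whereas you correctly observe that the lemma's constant-shift argument only holds on the slice $b(\bar{x})=B$ and close that gap by invoking Theorem~\ref{thm:CMCFPC:Simplex:OptimalBasicFeasibleFlow}.
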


	\begin{proof}
		Let $\d$ and $x^*$ be defined as above and let $\o{x}$ denote some arbitrary feasible flow. As shown in Lemma~\ref{lem:CMCFPC:Simplex:OptimizationEquivalent}, minimizing $c(\o{x}) = \sum_{e \in E} c_e \cdot \o{x}_e$ is equivalent to minimizing $\d(\o{x}) = \sum_{e \in E} \d_e \cdot \o{x}_e$. Since we have
		\begin{align*}
			\d(\o{x}) &= \sum_{e \in E} \d_e \cdot \o{x}_e \\
			&= \sum_{e \in L} \d_e \cdot \o{x}_e + \sum_{e \in T \cup \{e\}} \d_e \cdot \o{x}_e \\
			&\ \ \ + \sum_{e \in U} \d_e \cdot \o{x}_e \\
			&= \sum_{e \in L} \underbrace{\d_e}_{\geq 0} \cdot \underbrace{\o{x}_e}_{\geq 0} + \sum_{e \in U} \underbrace{\d_e}_{\leq 0} \cdot \underbrace{\o{x}_e}_{\leq u_e} \\
			&\geq \sum_{e \in U} \d_e \cdot u_e = \d(x^*),
		\end{align*}
		this shows that $x^*$ is optimal.
	\qed\end{proof}

	\section{Termination and Running Time}
	\label{sec:CMCFPC:Simplex:Termination}

	As shown above, we only make progress with respect to the objective function value if the corresponding simplex pivot is non-degenerate. However, like in the case of the traditional simplex method and the traditional network simplex algorithm (cf., e.g., \citep{Ahuja,DantzigGeneralizedFlows}), it may be possible to end in an infinite loop of degenerate pivots if no further steps are undertaken. In the case of the traditional network simplex algorithm, there are two common methods to prevent cycling of the procedure: One can either use a perturbed problem, in which the right-hand side vector of the LP formulation is suitably transformed, or use the concept of \emph{strongly feasible basis structures} in combination with a special leaving edge rule (cf. \citep{AhujaPaper}). In this section, we show that a combination of both approaches leads to a finite network simplex algorithm with pseudo-polynomial running time for \CMCFPC.

	In the following, we consider what we call the \emph{transformed problem} of the given instance of \CMCFPC, in which we replace the (previously integral) budget~$B$ by $B' \colonequals B + \frac{1}{2}$. In doing so, we maintain feasibility of the problem: According to Assumption~\ref{ass:CMCFPC:Simplex:InfeasibleMinCostFlow}, the minimum cost flow~$x$ that we obtain by some minimum cost flow algorithm fulfills $b(x) \geq B+1$. However, this implies that we can scale down $x$ to a feasible flow~$x'$ with $b(x') = B'$, so we can restrict our considerations to the transformed problem. As it turns out, each basic feasible flow of the transformed problem fulfills a useful property that will be essential throughout this section:

	\begin{lemma}\label{lem:CMCFPC:Simplex:IfOnCycleThenFractional}
		For each basis structure~$(L,T,U,\eb)$ of the transformed problem and its corresponding basic feasible flow~$x$, it holds that $x_e \notin \mathbb{N}_{\geq 0}$ for all $e \in C(\eb)$.
	\end{lemma}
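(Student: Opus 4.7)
The plan is to follow the two-step construction of the basic solution described just before Corollary~\ref{cor:CMCFPC:Simplex:IfFractionalThenOnCycle} and observe that the flow adjustment along $C(\eb)$ must be a half-integer, never an integer.

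First, I would analyze the first step of the construction, namely setting $x_\eb := 0$ and propagating the flow through the spanning tree $T$ using the usual tree-pass of the traditional network simplex method. The boundary values of this propagation are all integers (zero on edges of $L\cup\{\eb\}$ and $u_e\in\mathbb{N}_{\geq 0}$ on edges of $U$), so flow conservation forces the intermediate flow $x^{(1)}$ to be integral on every edge of $E$. Because the usage fees $b_e$ are integral as well, the quantity $\alpha := \sum_{e\in E} b_e \cdot x^{(1)}_e$ is an integer. This is the key preparatory observation.

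Next, I would look at the second step, where we send $\lambda$ units of flow around $C(\eb)$ to bring the total usage fee up to $B' = B+\tfrac{1}{2}$. Since each unit of circulation along $C(\eb)$ changes the usage fee by $b(C(\eb)) = \b_\eb$, the scalar equation $\lambda \cdot \b_\eb = B + \tfrac{1}{2} - \alpha$ yields
\[
\lambda \;=\; \frac{2(B-\alpha)+1}{2\,\b_\eb}.
\]
The numerator is odd and the denominator is even, so $\lambda \notin \mathbb{Z}$. Finally, for each $e \in C(\eb)$ we have $x_e = x^{(1)}_e \pm \lambda$ with $x^{(1)}_e \in \mathbb{Z}$, so $x_e$ is not an integer and in particular $x_e \notin \mathbb{N}_{\geq 0}$.

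I do not expect any serious obstacle: both ingredients (integrality of $x^{(1)}$ and of $\alpha$, plus the parity argument on $\lambda$) are essentially bookkeeping once the construction recipe is unpacked. The only subtle point I would flag is the order in which integrality is invoked: first the integrality of the capacities, usage fees, and original budget $B$ is used to conclude that $\alpha \in \mathbb{Z}$, and only afterwards the half-integer perturbation in $B'$ is exploited to force the odd numerator. Using the half-integrality of $B'$ too early would obscure the cleanness of the parity argument.
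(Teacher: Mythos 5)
Your proof is correct and follows essentially the same route as the paper: the paper invokes Corollary~\ref{cor:CMCFPC:Simplex:IfFractionalThenOnCycle} to decompose $x$ into an integral part plus a circulation of value $\lambda$ on $C(\eb)$, whereas you re-derive that decomposition from the two-step construction, but both arguments reduce to the same parity observation that $\lambda$ has an odd numerator over the even denominator $2\,b(C(\eb))$ and hence cannot be an integer. No gaps.
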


	\begin{proof}
		According to Corollary~\ref{cor:CMCFPC:Simplex:IfFractionalThenOnCycle}, the flow~$x$ can be decomposed into an integral flow~$x^I$ and a flow~$x^C$ that is positive only on $C(\eb)$. Since $b_e \in \mathbb{N}_{\geq 0}$ for each $e \in E$, it holds that $\sum_{e \in E} b_e \cdot x^I_e \in \mathbb{N}_{\geq 0}$, so $b(x) = B' = B + \frac{1}{2}$ implies that $\sum_{e \in C(\eb)} b_e \cdot x^C_e = k + \frac{1}{2}$ for some integer~$k$. Since $x^C$ is positive only on $C(\eb)$, it holds that $x^C_e = \lambda$ for each $e \in C^+(\eb)$ and $x^C_e = -\lambda$ for each $e \in C^-(\eb)$ with $\lambda = \frac{k + \frac{1}{2}}{b(C(\eb))} \notin \mathbb{N}_{\geq 0}$, which shows the claim.
	\qed\end{proof}

	We now show that we can restrict our considerations solely to the transformed problem since an optimal basic solution that is obtained by an application of the network simplex algorithm to the transformed problem also yields an optimal basic solution of the original problem:

	\begin{lemma}
		An optimal basic solution of the transformed problem can be turned into an optimal basic solution of the original problem in $\mathcal{O}(n)$~time.
	\end{lemma}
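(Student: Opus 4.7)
The plan is to extract $\tilde{x}$ from $x^*$ by pushing $\frac{1}{2\b_\eb}$ units of flow around the cycle $C(\eb)$ against the orientation of $\chi(C(\eb))$; that is, I would set
\[
	\tilde{x}_e \colonequals x^*_e - \frac{1}{2\b_\eb} \cdot (\chi(C(\eb)))_e
\]
for $e \in C(\eb)$ and $\tilde{x}_e \colonequals x^*_e$ otherwise. The same algebra as in Section~\ref{sec:CMCFPC:Simplex:Pivots} shows that $\tilde{x}$ preserves flow conservation and satisfies $b(\tilde{x}) = B' - \tfrac{1}{2} = B$, and since $\tilde{x}_e$ agrees with $x^*_e$ on $L \cup U$, the flow $\tilde{x}$ is precisely the basic solution of the original problem associated with the basis structure $(L,T,U,\eb)$.

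The crux of the argument is verifying $0 \leq \tilde{x}_e \leq u_e$ for every $e \in C(\eb)$ (outside $C(\eb)$ this is immediate from $\tilde{x}_e = x^*_e$). My idea is to embed the transition from $x^*$ to $\tilde{x}$ into a parametric family of basic solutions for $(L,T,U,\eb)$: using the decomposition $x^* = x^I + \lambda \, \chi(C(\eb))$ with $x^I$ integer from Corollary~\ref{cor:CMCFPC:Simplex:IfFractionalThenOnCycle}, define
\[
	x(B_0) \colonequals x^I + \frac{B_0 - b(x^I)}{\b_\eb} \cdot \chi(C(\eb)),
\]
so that every coordinate $x(B_0)_e$ is affine in $B_0$ with $x(B') = x^*$ and $x(B) = \tilde{x}$. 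Solving $x(B_0)_e = y$ for $y \in \{0, u_e\}$ yields $B_0 = b(x^I) + \epsilon_e \, \b_\eb \, (y - x^I_e) \in \mathbb{Z}$, where $\epsilon_e = (\chi(C(\eb)))_e \in \{\pm 1\}$. Consequently, every critical budget value at which some coordinate of $x(B_0)$ can hit a capacity bound is integral, and since the open interval $(B, B')$ contains no integer while $x(B')$ is feasible, feasibility propagates continuously to $x(B) = \tilde{x}$. This is the main obstacle of the proof; the half-integer offset $B' - B = \tfrac{1}{2}$ is deliberately chosen precisely so that no critical budget value can lie strictly between $B$ and $B'$.

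Optimality of $\tilde{x}$ then comes for free: since $\tilde{x}$ shares the basis structure $(L,T,U,\eb)$ with $x^*$, the node potentials $\pi, \mu$ and all reduced costs $\c, \b, \d$ are unchanged. Optimality of $x^*$ for the transformed problem forces $\d_e \geq 0$ on $L$, $\d_e \leq 0$ on $U$, and $\d_e = 0$ on $T \cup \{\eb\}$, so the optimality theorem of Section~\ref{sec:CMCFPC:Simplex:Optimality} applied with budget $B$ directly certifies $\tilde{x}$ as an optimal basic feasible solution of the original problem. Finally, since $|C(\eb)| \leq n$, constructing $\tilde{x}$ from $x^*$ touches only $\mathcal{O}(n)$ edges, which yields the claimed running time.
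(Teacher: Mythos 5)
Your proposal is correct and follows essentially the same route as the paper: shift flow along $C(\eb)$ by exactly $-\frac{1}{2\b_\eb}\cdot\chi(C(\eb))$ so that the budget consumption drops from $B'$ to $B$, observe that feasibility is preserved because all critical values are integral while $(B,B')$ contains no integer, and conclude optimality from $\d_e = 0$ on $C(\eb)$ and the unchanged reduced costs. Your parametric-budget argument for feasibility is a more explicit rendering of what the paper asserts via Lemma~\ref{lem:CMCFPC:Simplex:IfOnCycleThenFractional} and Corollary~\ref{cor:CMCFPC:Simplex:IfFractionalThenOnCycle}, but the underlying integrality reasoning is the same.
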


	\begin{proof}
		Let $(L,T,U,\eb)$ denote a basis structure that implies an optimal solution~$x^*$ of the transformed problem. According to Lemma~\ref{lem:CMCFPC:Simplex:IfOnCycleThenFractional}, the flow~$x^*_e$ on each edge~$e \in C(\eb)$ fulfills~$x^*_e \notin \mathbb{N}_{\geq 0}$. In particular, this implies that $x^*_e \in (0,u_e)$ for each $e \in C(\eb)$, so we can increase or reduce the flow on the cycle by a small amount without violating any flow bounds. Since $x^*_e \in \mathbb{N}_{\geq 0}$ for each $e \notin C(\eb)$ according to Corollary~\ref{cor:CMCFPC:Simplex:IfFractionalThenOnCycle} and since $b_e \in \mathbb{N}_{\geq 0}$ for each $e \in E$, it must hold that we can increase or reduce the flow on $C(\eb)$ by $\delta$~units such that $\delta \cdot b(C(\eb)) = -\frac{1}{2}$, i.e., such that we obtain a feasible flow for the original problem. Moreover, note that $\d_e = 0$ for each $e \in C(\eb)$, i.e., the flow still fulfills the optimality conditions from Lemma~\ref{lem:CMCFPC:Simplex:OptimizationEquivalent}. Since the flow values on the edges in $L \cup U$ remain unchanged, the resulting flow is the basic solution corresponding to the basis structure~$(L,T,U,\eb)$ for the original problem, which shows the claim.
	\qed\end{proof}

	As noted above, one method to prevent cycling of the traditional network simplex algorithm is to use the concept of \emph{strongly feasible basis structures}, which are feasible basis structures in which every edge in $L$ is directed to the root node and every edge in $U$ heads away from the root node (cf. \citep[p. 422]{Ahuja}). As shown by \citet{AhujaPaper}, an equivalent definition is that, in the corresponding basic feasible flow, it is possible to send a positive amount of additional flow from every node~$v \in V$ to the root node via tree edges. For \CMCFPC, it turns out that a strongly feasible basis structure remains strongly feasible after a simplex pivot if the leaving edge is chosen appropriately, just as in the case of the traditional network simplex algorithm:

	\begin{lemma}\label{lem:CMCFPC:Simplex:StronglyFeasible}
		Let $(L,T,U,\eb)$ denote a strongly feasible basis structure of the transformed problem. The leaving edge~$e'$ can be chosen such that the basis structure~$(L',T',U',\eb')$ that results from a simplex pivot is again strongly feasible.
	\end{lemma}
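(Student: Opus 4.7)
The plan is to adapt Cunningham's leaving edge rule from the traditional network simplex algorithm. Let $w$ denote the apex of the cycle $C(e)$, that is, the node closest to the root $v_r$ among the nodes of $C(e)$ in the tree~$T$, and orient $C(e)$ in the direction along which the pivot augments flow. Among the blocking edges on $C(e)$, I select $e'$ as the last one encountered when traversing $C(e)$ starting from $w$ in this orientation; only if no edge on $C(e)$ is blocking do I pick a blocking edge on $C(\eb)$ (in which case $\delta > 0$ automatically).

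For the degenerate case ($\delta = 0$), the key observation is that Lemma~\ref{lem:CMCFPC:Simplex:IfOnCycleThenFractional} applied to $(L,T,U,\eb)$ guarantees $x_{e''} \in (0,u_{e''})$ for every $e'' \in C(\eb)$. Hence $\delta_{e''} > 0$ for such edges, so no edge of $C(\eb)$ can be blocking when $\delta = 0$, and the candidate leaving edges all lie on $C(e) \setminus C(\eb)$. Moreover, the flow on $C(\eb)$ is not modified at all when $\delta = 0$, so the pivot reduces to a standard degenerate pivot on the cycle $C(e)$ in the traditional network simplex algorithm, and the classical argument of Cunningham (cf.~\citep[Chapter~11]{Ahuja}) yields that the above rule preserves strong feasibility with respect to the new tree.

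For the non-degenerate case ($\delta > 0$), I would argue strong feasibility of $(L',T',U',\eb')$ directly by distinguishing the three cases for $e'$ listed in Section~\ref{sec:CMCFPC:Simplex:Pivots}. It suffices to show that every node still admits a positive-residual path to $v_r$ through $T'$. Outside of $C(e)$ the tree and its flow values are unchanged, so strong feasibility is inherited. On $C(e)$, the same Cunningham-style reasoning as in the traditional case applies: above $w$ nothing changes, and a node on $C(e)$ past $e'$ can reach $w$ via the newly basic edge $e$ (which either joins $T'$ or takes the role of $\eb'$). Finally, every tree edge of $T'$ that happens to lie on the new budget cycle $C(\eb')$ carries strictly fractional flow by Lemma~\ref{lem:CMCFPC:Simplex:IfOnCycleThenFractional} applied to $(L',T',U',\eb')$ and therefore has positive residual capacity in both directions.

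The main obstacle I anticipate is the interaction of $C(e)$ and $C(\eb)$ on their common sub-path $P_0$ from the proof of Lemma~\ref{lem:CMCFPC:Simplex:NoZeroCycle}: on $P_0$ the flow change is scaled by the factor $1 - \b_e/\b_\eb$, whose sign and magnitude are a priori uncontrolled, so I cannot simply transplant the traditional analysis there. The plan resolves this by exploiting $P_0 \subseteq C(\eb)$: every edge on $P_0$ has fractional flow by Lemma~\ref{lem:CMCFPC:Simplex:IfOnCycleThenFractional} and thus can neither become a blocking edge during a degenerate pivot nor lose residual capacity in one direction after a sufficiently small non-degenerate pivot, which is exactly what is needed for the strong feasibility of $(L',T',U',\eb')$.
\qed
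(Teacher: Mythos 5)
Your degenerate case and your first non-degenerate case ($E'\subseteq C(e)\setminus C(\eb)$) are sound and match the paper's argument, but the proposal breaks down exactly at the obstacle you identify, the common path $P_0=C(e)\cap C(\eb)$. You claim that an edge of $P_0$ cannot ``lose residual capacity in one direction after a sufficiently small non-degenerate pivot.'' The pivot is not sufficiently small: $\delta$ is the exact minimum ratio $\min_{e''}\delta_{e''}$, and since every edge $e''\in P_0$ has fractional flow but a \emph{finite} value $\delta_{e''}=$ (distance to bound)$/|\theta_{e''}|$ with $\theta_{e''}=\pm 1\mp\b_e/\b_\eb\neq 0$ in general, that minimum can perfectly well be attained on $P_0$. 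The paper devotes an entire case to this situation. When it occurs, your rule (last blocking edge on $C(e)$ traversed from its apex in the direction of $e$) can select the wrong edge: the net flow change on $P_0$ has coefficient $\pm 1\mp\b_e/\b_\eb$, whose sign is uncontrolled, so the actual direction in which flow is pushed along $P_0$ may be \emph{opposite} to the orientation of $C(e)$. Cunningham's argument requires the ``last'' blocking edge with respect to the direction of the actual flow change, so the correct rule on $P_0$ is to take the last blocking edge in the direction of the net change on $P_0$ (which is well defined and nonzero precisely because a blocking edge exists there), starting from the apex.

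Two further points. First, you are missing the structural observation that makes the case analysis clean: the graph $T\cup\{e,\eb\}$ is a theta graph with three arcs, the new budget cycle $C(\eb')$ is the union of two of them and must carry fractional flow by Lemma~\ref{lem:CMCFPC:Simplex:IfOnCycleThenFractional} applied to the \emph{new} basis, hence all blocking edges lie on a single one of the three arcs ($C(e)\setminus C(\eb)$, $C(\eb)\setminus C(e)$, or $P_0$) --- they cannot be spread over two of them. Without this, ``the last blocking edge on $C(e)$'' would mix edges of $C(e)\setminus C(\eb)$ and of $P_0$, on which the flow changes have different magnitudes and possibly different directions. Second, your fallback rule for the case that all blocking edges lie on $C(\eb)\setminus C(e)$ is underspecified: ``pick a blocking edge on $C(\eb)$'' does not suffice, since several edges of $C(\eb)$ may reach their bounds simultaneously and the ones that stay in $T'$ then separate some nodes from the apex; you again need the last blocking edge when traversing $C(\eb)$ from its apex in the direction in which flow is sent on it.
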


	\begin{proof}
		Let $e=(v,w)$ denote the entering edge (we assume that $e \in L$; the case that $e \in U$ works analogously) and let $E' \subseteq T \cup \{\eb\}$ denote the set of blocking edges that determine the value of $\delta$ in the simplex pivot. Note that the graph that is induced by $T \cup \{\eb, e\}$ contains up to three simple cycles, one of which must carry a fractional amount of flow after the simplex pivot according to Lemma~\ref{lem:CMCFPC:Simplex:IfOnCycleThenFractional}. Hence, since the cycle that carries a fractional amount of flow cannot contain a blocking edge, there are three cases to distinguish: It either holds that $E' \subseteq C(e) \setminus C(\eb)$ or that $E' \subseteq C(\eb) \setminus C(e)$ or that $E' \subseteq C(e) \cap C(\eb)$ (cf. Figure~\ref{fig:CMCFPC:Simplex:StronglyFeasibleCycles}). We distinguish these three cases in the following. Note that, as in the proof of Lemma~\ref{lem:CMCFPC:Simplex:NoZeroCycle}, we get that $C(e) \cap C(\eb)$ corresponds to a single simple path~$P_0$ consisting of edges in $T$.

		\begin{figure}[ht!]
			\centering
			\begin{tikzpicture}[>=stealth',->] \node[scale=1.0] {\begin{tikzpicture}
			\draw[->,xscale=1.5] (0,2) arc(45:315:1.4);
			\draw[->,xscale=-1.5] (0,2) arc(45:315:1.4);
			\node[knoten] (A) at (0,0) {};
			\node[knoten] (B) at (0,2) {};
			\path[decoration={zigzag,amplitude=0.8mm,segment length=3mm}] (A) edge node[right] {$P_0$} (B);
			\node at (-1.5,1) {$C(\eb)$};
			\node at (1.5,1) {$C(e)$};

			\node at (2,-2) {\small $E' \subseteq C(e) \setminus C(\eb)$};
			\node[scale=0.6] at (-2,-2) {
				\begin{tikzpicture}[>=stealth',->]
				\draw[->,xscale=1.5,gray] (0,2) arc(45:315:1.4);
				\draw[->,xscale=-1.5,thick] (0,2) arc(45:315:1.4);
				\node[knoten] (A) at (0,0) {};
				\node[knoten] (B) at (0,2) {};
				\path (A) edge[gray] node[right] {$P_0$} (B);
				\node[gray] at (-1.5,1) {$C(\eb)$};
				\node at (1.5,1) {$C(e)$};
				\end{tikzpicture}
			};
			\node at (2,-4) {\small $E' \subseteq C(\eb) \setminus C(e)$};
			\node[scale=0.6] at (-2,-4) {
				\begin{tikzpicture}[>=stealth',->]
				\draw[->,xscale=1.5,thick] (0,2) arc(45:315:1.4);
				\draw[->,xscale=-1.5,gray] (0,2) arc(45:315:1.4);
				\node[knoten] (A) at (0,0) {};
				\node[knoten] (B) at (0,2) {};
				\path (A) edge[gray] node[right] {$P_0$} (B);
				\node at (-1.5,1) {$C(\eb)$};
				\node[gray] at (1.5,1) {$C(e)$};
				\end{tikzpicture}
			};
			\node at (2,-6) {\small $E' \subseteq C(e) \cap C(\eb)$};
			\node[scale=0.6] at (-2,-6) {
				\begin{tikzpicture}[>=stealth',->]
				\draw[->,xscale=1.5,gray] (0,2) arc(45:315:1.4);
				\draw[->,xscale=-1.5,gray] (0,2) arc(45:315:1.4);
				\node[knoten] (A) at (0,0) {};
				\node[knoten] (B) at (0,2) {};
				\path (A) edge[thick] node[right] {$P_0$} (B);
				\node[gray] at (-1.5,1) {$C(\eb)$};
				\node[gray] at (1.5,1) {$C(e)$};
				\end{tikzpicture}
			};
			\end{tikzpicture}};\end{tikzpicture}

			\caption{The graph induced by $T \cup \{e,\eb\}$ (top) and the three possible cases for the set~$E'$ of blocking edges (bottom). In each of these cases, the set~$E'$ is contained in the solid black segment.}
			\label{fig:CMCFPC:Simplex:StronglyFeasibleCycles}
		\end{figure}
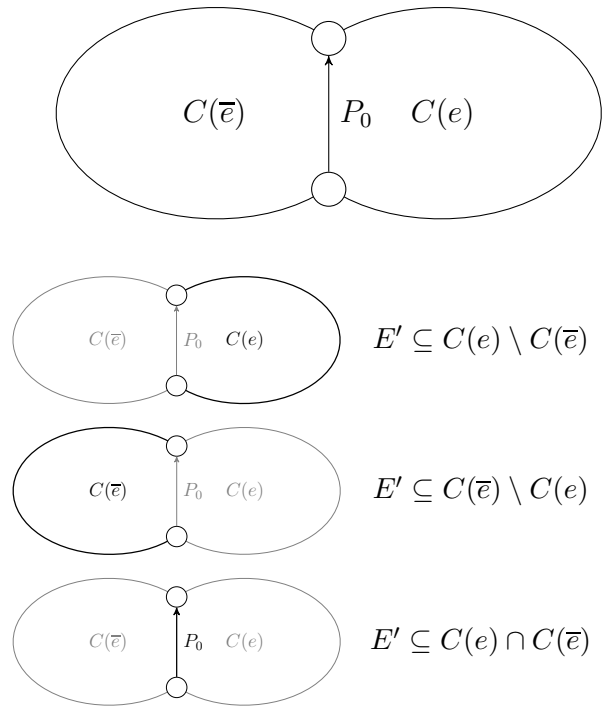

		First assume that $E' \subseteq C(e) \setminus C(\eb)$. In this case, it holds that $\eb' = \eb$ and the cycle~$C(\eb)$ continues to exist in $T' \cup \{\eb'\}$. Hence, the flow on all edges in $C(\eb)$ remains fractional and we are still able to send a positive amount of flow from any node in $C(\eb)$ to the apex\footnote{The \emph{apex} of a cycle~$C(e)$ with $e=(v,w)$ is the first common node of the two unique paths in $T$ from $v$ to the root and from $w$ to the root.} of $C(\eb)$. The rest of the proof of this case is analogous to the one for the traditional network simplex algorithm (cf., e.g., \citep[pp. 424--425]{Ahuja} and Figure~\ref{fig:CMCFPC:Simplex:StronglyFeasibleCase1}): We choose the leaving edge~$e' = (v',w')$ to be the last edge in $E'$ that occurs when traversing the cycle~$C(e)$ in the direction of $e$, starting at the apex~$z$ of $C(e)$. For the sake of notational simplicity, assume that $e' \in C^+(e)$ (the case that $e' \in C^-(e)$ works analogously). Clearly, if $\o{v}$ is any node on the path from $w'$ to $z$ (in the direction of the cycle), we can still send a positive amount of flow from $\o{v}$ to $z$ since there are no blocking edges on this path according to the choice of $e'$. On the other hand, for the case that the simplex pivot is non-degenerate, we send a positive amount of flow on the path from $z$ to $v'$ (which may be reduced by the flow on the cycle~$C(\eb)$ on the edges in $C(e) \cap C(\eb)$, but which will not be reduced to zero since $C(\eb)$ contains no blocking edges). Hence, we can send a positive amount of flow back from every node~$\o{v}$ on the path from $v'$ to $z$ in $T' \cup \{\eb'\} = T \cup \{e,\eb\} \setminus \{e'\}$. For the case that the simplex pivot is degenerate, it must hold that all blocking edges $E'$ lie on the path from $z$ to $v$ since $(L,T,U,\eb)$ is strongly feasible and we can, thus, send a positive amount of flow to $z$ on every edge on the path from $w$ to $z$. However, in the degenerate case, we do not change the flow on the edges on the path from $z$ to $v'$ and can, thus, still send a positive amount of flow from $v'$ to $z$. Note that the flow on every edge in $E \setminus (C(e) \cup C(\eb))$ does not change, so we can send a positive amount of flow from every node in $V$ to the root node~$r$ after the simplex pivot (possibly via the edges in $C(e) \setminus \{e'\}$). Hence, we maintain a strongly feasible basis structure in this case.

		\begin{figure}[ht!]
			\centering
			\begin{tikzpicture}[>=stealth',->]
			\node[knoten] (Z) at (0,0) {$z$};
			\node[knoten] (VP) at (-1, -2) {$v'$};
			\node[knoten] (WP) at (-1.5, -3) {$w'$};
			\node[knoten] (V) at (-2.5, -5) {$v$};
			\node[knoten] (W) at (2.5, -5) {$w$};

			\path[decoration={zigzag,amplitude=0.8mm,segment length=3mm}] (Z) edge[decorate] (VP);
			\path (VP) edge node[above left] {$e'$} (WP);
			\path[decoration={zigzag,amplitude=0.8mm,segment length=3mm}] (WP) edge[decorate] (V);
			\path (V) edge node[below] {$e$} (W);
			\path[decoration={zigzag,amplitude=0.8mm,segment length=3mm}] (W) edge[decorate] (Z);

			\draw[->] (0,-2.5) arc(90:405:.5);

			\end{tikzpicture}

			\caption{A cycle~$C(e)$ that is induced by the entering edge~$e=(v,w) \in L$ with $\d_e < 0$. After sending flow on $C(e)$, the leaving edge~$e'$ is the last blocking edge when traversing $C(e)$ in the direction of $e$ starting from the apex~$z$.}
			\label{fig:CMCFPC:Simplex:StronglyFeasibleCase1}
		\end{figure}
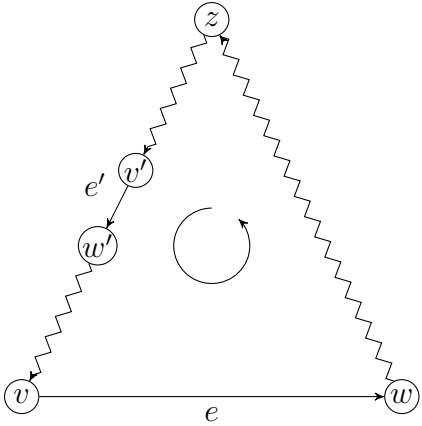

		The second case, in which $E' \subseteq C(\eb) \setminus C(e)$, works similar to the previous case. Note that we now get that $\eb' = e$ since the cycle~$C(\eb)$ vanishes. We choose the leaving edge to be the last blocking edge that occurs when traversing the cycle~$C(\eb)$ in the direction in that we send the flow in the simplex pivot, starting at the apex of $C(\eb)$. Note that the simplex pivot must be non-degenerate in this case since the blocking edges lie on $C(\eb)$ and every edge on $C(\eb)$ carries a fractional amount of flow before the simplex pivot.

		It remains to show that we maintain a strongly feasible basis structure in the case that $E' \subseteq P_0 = C(e) \cap C(\eb)$ (cf. Figure~\ref{fig:CMCFPC:Simplex:BlockingOnP0}). As in the previous case, the simplex pivot is non-degenerate since all edges in $C(\eb)$ carry a fractional amount of flow before the simplex pivot. Thus, the algorithm sends a positive amount of flow~$\delta$ along $C(e)$ and $\o{\delta} \colonequals -\frac{b(C(e))}{b(C(\eb))} \cdot \delta = -\frac{\b_e}{\b_\eb} \cdot \delta$~units of flow along $C(\eb)$. Since no edge in $C(e) \setminus C(\eb)$ and $C(\eb) \setminus C(e)$ is a blocking edge and we send flow on both cycles, neither of these edges is at its lower or upper bound after the simplex pivot (this also follows from the fact that the new cycle in $T' \cup \{\eb'\} = T \cup \{\eb,e\} \setminus \{e\}$ consists of the edges in $(C(e) \setminus C(\eb)) \cup (C(\eb) \setminus C(e))$ and, thus, carries a fractional amount of flow, cf. the gray paths in Figure~\ref{fig:CMCFPC:Simplex:BlockingOnP0}). Moreover, since $E' \subseteq P_0$ (and $E' \neq \emptyset$), we must have $\delta \neq \o{\delta}$. Thus, there is a unique direction in which the flow is sent on $P_0$ (from $z$ to $\o{w}$ in Figure~\ref{fig:CMCFPC:Simplex:BlockingOnP0}). We choose the leaving edge~$e'=(v',w')$ to be the last blocking edge that occurs on any of the two cycles when traversing the corresponding cycle in the direction of this flow, starting from the apex of the cycle. We can then send flow from $w'$ to the apexes of both cycles (since there are no further blocking edges on the corresponding subpath of $P_0$ and since the flow on the new cycle is fractional) and from $v'$ to the apexes (since we have sent a positive amount of flow to $v'$ on the corresponding subpath of $P_0$ and since the flow on the new cycle is fractional). Hence, using the same arguments as in the previous two cases, we get that we can send a positive amount of flow from each node~$v \in V$ to the root node, which concludes the proof.
		\begin{figure}[ht!]
			\centering
			\begin{tikzpicture}[>=stealth',->]
				\node[knoten] (Zb) at (0,0) {$\o{z}$};
				\node[knoten] (Z) at (-1,-1) {$z$};
				\node[knoten] (ebv) at (4,-4) {$\o{v}$};
				\node[knoten] (ebw) at (0.5,-4) {$\o{w}$};
				\node[knoten] (ew) at (1,-5) {$w$};
				\node[knoten] (ev) at (-4,-5) {$v$};
				\node[knoten] (evp) at (-0.25,-2.5) {$v'$};
				\node[knoten] (ewp) at (0.075,-3.25) {$w'$};

				\path[decoration={zigzag,amplitude=0.8mm,segment length=3mm}] (Zb) edge[decorate,gray] (Z);
				\path[decoration={zigzag,amplitude=0.8mm,segment length=3mm}] (Zb) edge[decorate,gray] (ebv);
				\path[decoration={zigzag,amplitude=0.8mm,segment length=3mm}] (Z) edge[decorate,gray] (ev);
				\path[decoration={zigzag,amplitude=0.8mm,segment length=3mm}] (ebw) edge[decorate,gray] (ew);
				\path[decoration={zigzag,amplitude=0.8mm,segment length=3mm}] (Z) edge[decorate,thick] (evp);
				\path[decoration={zigzag,amplitude=0.8mm,segment length=3mm}] (ewp) edge[decorate,thick] (ebw);
				\path (ebv) edge[gray] node[below,gray] {$\o{e}$} (ebw);
				\path (ev) edge[gray] node[below,gray] {$e$} (ew);
				\path (evp) edge[thick] node[left,thick] {$e'$} (ewp);

				\draw[->,gray] (-1.5,-3.25) arc(90:405:.5);
				\draw[->,gray] (1.25,-2.25) arc(90:405:.5);
				\draw[->] (-0.5,-1.25) -- (0.75,-3.5);
			\end{tikzpicture}
			\caption{The case that $E' \subseteq P_0 = C(e) \cap C(\eb)$. The leaving edge~$e'$ is chosen to be the last edge on $P_0$ when traversing any of the two cycles~$C(e)$ and $C(\eb)$ in the direction of the flow change on $P_0$.}
			\label{fig:CMCFPC:Simplex:BlockingOnP0}
		\end{figure}
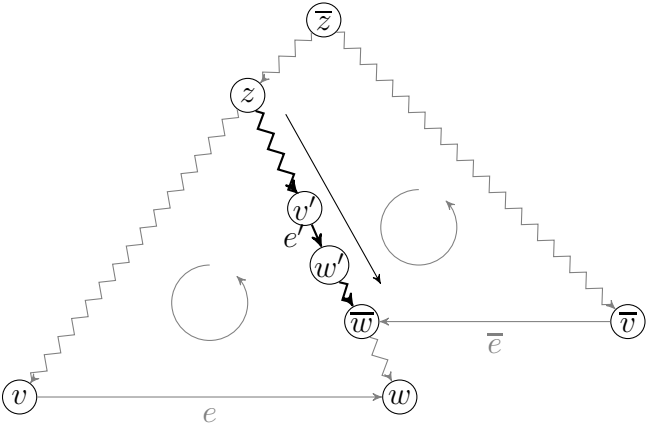
	\qed\end{proof}

	Lemma~\ref{lem:CMCFPC:Simplex:StronglyFeasible} builds the foundation for the following theorem, which shows that the network simplex algorithm for \CMCFPC does not cycle when using strongly feasible basis structures:

	\begin{theorem}\label{thm:CMCFPC:Simplex:DegeneratePivots}
		The network simplex algorithm applied to the transformed problem terminates within a finite number of simplex pivots when using strongly feasible basis structures. Moreover, the number of consecutive degenerate simplex pivots is bounded by $\mathcal{O}(n^3 \mathcal{CB})$.\footnote{In the following, we denote by $\mathcal{C}$, $\mathcal{U}$, and $\mathcal{B}$ the maximum absolute values of edge costs, capacities, and usage fees, respectively.} 
	\end{theorem}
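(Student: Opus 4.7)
My plan is to split the statement into two parts: (i) finite termination of the overall procedure, which follows from (ii) the claimed $\mathcal{O}(n^3\mathcal{CB})$ bound on consecutive degenerate pivots, together with the fact that each non-degenerate pivot strictly decreases $c(x)$. By Corollary~\ref{cor:CMCFPC:Simplex:IfFractionalThenOnCycle}, every basic feasible flow decomposes as $x=x^I+x^C$ with $x^I$ integer-valued and $x^C$ supported only on $C(\eb)$; the perturbed budget $B+\frac{1}{2}$ together with $x^I$ and the orientation of $C(\eb)$ determines $x^C$ uniquely, so the attainable values of $c(x)$ over basic feasible flows form a finite set. This bounds the number of non-degenerate pivots.

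For (ii), I would use a Lyapunov argument on the integer-valued node potentials. Both $\Pi\colonequals\sum_{v\in V}\pi_v$ and $M\colonequals\sum_{v\in V}\mu_v$ are integers, since the root potentials are $0$ and tree propagation uses only the integer quantities $c_e$ and $b_e$. When a pivot occurs with entering edge $e$ and leaving edge $e'$, removing $e'$ from $T'$ disconnects a subtree $T_0$ from the root, and the potentials on $T_0$ shift uniformly by integer amounts $\Delta\pi$ and $\Delta\mu$ determined by the requirement that the new tree edges have zero reduced costs. A case analysis along the structural distinction in the proof of Lemma~\ref{lem:CMCFPC:Simplex:NoZeroCycle} (leaving edge in $C(e)\setminus C(\eb)$, in $C(\eb)\setminus C(e)$, or in the shared path $C(e)\cap C(\eb)$) pins down $\Delta\pi$ and $\Delta\mu$ in each case as an integer combination of $\c_e,\c_\eb,\b_e,\b_\eb$.

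Next, I would define $\Phi\colonequals K\cdot\Pi+M$ with an integer weight $K=\Theta(n\mathcal{B})$ chosen so that $|T_0|\cdot K\cdot\Delta\pi$ dominates $|T_0|\cdot\Delta\mu$ in absolute value whenever $\Delta\pi\neq 0$. The sign hypothesis $\d_e<0$ (respectively $\d_e>0$) on the entering edge, combined with the strongly-feasible leaving rule of Lemma~\ref{lem:CMCFPC:Simplex:StronglyFeasible} and the non-integrality of flow on $C(\eb)$ from Lemma~\ref{lem:CMCFPC:Simplex:IfOnCycleThenFractional}, then forces a definite sign for the resulting change in $\Phi$, so $\Phi$ strictly decreases by at least $1$ per degenerate pivot. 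Using the standard bounds $|\pi_v|\le n\mathcal{C}$ and $|\mu_v|\le n\mathcal{B}$ derived from tree-path summation, $|\Phi|\le K\cdot n^2\mathcal{C}+n^2\mathcal{B}=\mathcal{O}(n^3\mathcal{CB})$, which bounds the number of consecutive degenerate pivots as claimed.

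The main obstacle is the third case of the case analysis, where $e'$ lies on the shared path $P_0=C(e)\cap C(\eb)$ and both cycles are modified at once: here the subtree $T_0$ splits nontrivially and both $\Delta\pi$ and $\Delta\mu$ receive contributions from both cycles. The direction of flow change on $P_0$, pinned down precisely by the strongly-feasible leaving rule, is what determines the signs of $\Delta\pi$ and $\Delta\mu$, and verifying that the Lyapunov decrease holds uniformly in this case (and not just when $C(e)$ and $C(\eb)$ are edge-disjoint) is where the bulk of the case analysis concentrates. The non-integrality guarantee of Lemma~\ref{lem:CMCFPC:Simplex:IfOnCycleThenFractional} is used crucially here to rule out pathological leaving edges on $C(\eb)$ that would otherwise break the sign argument.
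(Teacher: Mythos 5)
Your overall architecture (a strictly decreasing potential built from the integral node potentials, plus finiteness of attainable objective values to bound the non-degenerate pivots) matches the paper's, but the specific Lyapunov function you propose does not work, and this is a genuine gap rather than a detail. You set $\Phi \colonequals K\cdot\Pi + M$ with $K=\Theta(n\mathcal{B})$ chosen so that the $\pi$-contribution dominates. Domination only helps if the sign of $\Delta\pi$ is itself controlled, and it is not: when $e\in L$ enters, the potentials on the affected subtree shift by $\c_e$ and $\b_e$, and the pivot rule only guarantees $\d_e = \c_e - \c_\eb\cdot\frac{\b_e}{\b_\eb} < 0$, which says nothing about the sign of $\c_e$ alone (take $\c_\eb=10$, $\b_\eb=\b_e=1$, $\c_e=5$: then $\d_e=-5<0$ but $\Delta\pi=+5$, so your $\Phi$ \emph{increases}). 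The only linear combination of $\Delta\pi=\pm\c_e$ and $\Delta\mu=\pm\b_e$ whose sign the entering-edge rule controls is precisely $\d_e$ itself, so the weight on $\mu_v$ must be the ratio $\c_\eb/\b_\eb$ (a fixed rational number during a run of degenerate pivots, since $\eb$ does not change), not a large integer. This is exactly what the paper does: it tracks $\sum_{v}\bigl(\pi_v + \frac{\c_\eb}{\b_\eb}\mu_v\bigr)$, whose per-node change equals $\d_e$ (resp. $-\d_e$) and is therefore strictly negative. With a rational weight your ``decreases by at least $1$'' accounting also breaks; the paper instead counts the $\mathcal{O}(n^2\mathcal{CB})$ admissible integer pairs $(\pi_v,\mu_v)$ per node to get the $\mathcal{O}(n^3\mathcal{CB})$ bound.

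A second, smaller point: your case analysis is misdirected. You anticipate that the hard case is a leaving edge on the shared path $P_0 = C(e)\cap C(\eb)$, but for a \emph{degenerate} pivot that case (and the case $E'\subseteq C(\eb)\setminus C(e)$) cannot occur at all: by Lemma~\ref{lem:CMCFPC:Simplex:IfOnCycleThenFractional} every edge of $C(\eb)$ carries fractional flow, whereas a degenerate blocking edge must already sit at an integral bound $0$ or $u_{e'}$. Hence all degenerate blocking edges lie in $C(e)\setminus C(\eb)$, $\eb'=\eb$, the ratio $\c_\eb/\b_\eb$ is unchanged, and only the single ``classical'' case (leaving edge on the path from the apex of $C(e)$ to $v$, by strong feasibility) needs to be analyzed. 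You invoke Lemma~\ref{lem:CMCFPC:Simplex:IfOnCycleThenFractional} in the right place but do not draw this conclusion, which is what collapses the proof to essentially the traditional network simplex argument with the modified potential.
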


	\begin{proof}
		Let $(L,T,U,\eb)$ denote a basis structure and $x$ denote its corresponding basic feasible flow. Consider a degenerate simplex pivot that occurs when adding the entering edge~$e=(v,w)$ to $T \cup \{\eb\}$ and choosing the leaving edge~$e'=(v',w')$ according to the leaving edge rules given in the proof of Lemma~\ref{lem:CMCFPC:Simplex:StronglyFeasible}, which leads to a new basis structure~$(L',T',U',\eb')$. Since the flow on $C(\eb)$ is fractional according to Lemma~\ref{lem:CMCFPC:Simplex:IfOnCycleThenFractional}, none of the edges on the cycle~$C(\eb)$ can be degenerate. Thus, it holds that $\eb' = \eb$ and that the cycle~$C(\eb)$ continues to exist in $T' \cup \{\eb'\}$. Moreover, as in the proof of Lemma~\ref{lem:CMCFPC:Simplex:StronglyFeasible}, the leaving edge~$e'$ must lie on the path in $T$ from the apex~$z$ of $C(e)$ to $v$ since the basis structure is strongly feasible (cf. Figure~\ref{fig:CMCFPC:Simplex:StronglyFeasibleCase1}). As in the traditional network simplex algorithm, after the simplex pivot, the node potentials~$\pi_v$ and $\mu_v$ are increased (decreased) by $\c_e$ and $\b_e$, respectively, for all nodes~$v$ on the path from $w'$ to $v$ in $T$ if $e \in L$ ($e \in U$), while the remaining node potentials remain unchanged (cf., e.g., \citep[p. 425]{Ahuja}). Thus, for each $v$ on this path, the new node potentials~$\pi'_v$ and $\mu'_v$ fulfill
		\begin{align*}
			& \pi'_v + \frac{\c_\eb}{\b_\eb} \cdot \mu'_v \\
			=\ & (\pi_v + \c_e) + \frac{\c_\eb}{\b_\eb} \cdot (\mu_v + \b_e) \\
			=\ & \left(\pi_v + \frac{\c_\eb}{\b_\eb} \cdot \mu_v \right) + \left(\c_e + \frac{\c_\eb}{\b_\eb} \cdot \b_e \right) \\
			=\ & \left(\pi_v + \frac{\c_\eb}{\b_\eb} \cdot \mu_v \right) + \d_e \\
			<\ & \pi_v + \frac{\c_\eb}{\b_\eb} \cdot \mu_v
		\end{align*}
		for the case that $e \in L$. Otherwise, if $e \in U$, we get that
		\begin{align*}
			& \pi'_v + \frac{\c_\eb}{\b_\eb} \cdot \mu'_v \\
			=\ & (\pi_v - \c_e) + \frac{\c_\eb}{\b_\eb} \cdot (\mu_v - \b_e) \\
			=\ & \left(\pi_v + \frac{\c_\eb}{\b_\eb} \cdot \mu_v \right) - \left(\c_e + \frac{\c_\eb}{\b_\eb} \cdot \b_e \right) \\
			=\ & \left(\pi_v + \frac{\c_\eb}{\b_\eb} \cdot \mu_v \right) - \d_e \\
			<\ & \pi_v + \frac{\c_\eb}{\b_\eb} \cdot \mu_v.
		\end{align*}
		Hence, the value~$\sum_{v \in V} \pi_v + \frac{\c_\eb}{\b_\eb} \cdot \mu_v$ decreases strictly after each degenerate simplex pivot. However, the values~$\pi_v$ are integers in $\{-n\mathcal{C}, \ldots, n\mathcal{C}\}$, while the values~$\mu_v$ lie in $\{-n\mathcal{B}, \ldots, n\mathcal{B}\}$ for each~$v \in V$ (cf. \citep[p. 425]{Ahuja}). Thus, since there are only $\mathcal{O}(n^2 \mathcal{CB})$ combinations of integral values for $\pi_v$ and $\mu_v$ for each node~$v \in V$, the algorithm performs a non-degenerate pivot after at most $\mathcal{O}(n^3 \mathcal{CB})$ degenerate simplex pivots and the claim follows.
	\qed\end{proof}

	While the leaving edge rules as described above guarantee finiteness of the procedure, we can reduce the number of non-degenerate simplex pivots by applying Dantzig's pivoting rule (cf., e.g., \citep{AhujaPaper}), i.e., by choosing the entering edge to be the one with the largest violation of its optimality conditions:

	\begin{lemma}\label{lem:CMCFPC:Simplex:NondegeneratePivots}
		The network simplex algorithm applied to the transformed problem performs $\mathcal{O}(n m \mathcal{UB} \log(m\mathcal{CUB}))$ non-degenerate simplex pivots in total when using Dantzig's pivoting rule.
	\end{lemma}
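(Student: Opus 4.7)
The plan is to follow a Dantzig-rule geometric-decrease argument for $c(x^k) - c(x^*)$ along non-degenerate pivots, combined with a lower bound on the smallest positive gap between two distinct basic-flow objective values of the transformed problem. The number of non-degenerate pivots will then be bounded by the contraction rate times the logarithm of the ratio between the initial gap and the minimal positive gap, yielding the claimed $\mathcal{O}(nm\mathcal{UB}\log(m\mathcal{CUB}))$.

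For the per-pivot contraction, fix a non-degenerate iteration with current basis structure $(L,T,U,\eb)$, basic feasible flow $x^k$, and entering edge $e^*$ chosen to maximize $|\d_{e^*}|$ among violating edges. By Lemma~\ref{lem:CMCFPC:Simplex:OptimizationEquivalent} we have $c(x^k) - c(x^*) = \d(x^k) - \d(x^*)$, and since $\d_e = 0$ for every $e \in T \cup \{\eb\}$ this difference equals $\sum_{e \in L}(-\d_e)x^*_e + \sum_{e \in U}\d_e (u_e - x^*_e)$, which is bounded above by $m\mathcal{U}\cdot |\d_{e^*}|$ by Dantzig's choice. The pivot analysis of Section~\ref{sec:CMCFPC:Simplex:Pivots} yields $c(x^{k+1}) = c(x^k) - \delta_k\cdot |\d_{e^*}|$, so $c(x^{k+1}) - c(x^*) \leq (1 - \delta_k/(m\mathcal{U}))(c(x^k) - c(x^*))$.

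Next, I would use Corollary~\ref{cor:CMCFPC:Simplex:IfFractionalThenOnCycle} and Lemma~\ref{lem:CMCFPC:Simplex:IfOnCycleThenFractional} to lower-bound $\delta_k$. Writing $x = x^I + x^C$ with $x^C$ supported on $C(\eb)$ at value $\pm(k_0+\tfrac{1}{2})/\b_\eb$ for some integer $k_0$, a case analysis on $\theta_{e'} \in \{\pm 1,\,\pm \b_e/\b_\eb,\,\pm(\b_\eb \mp \b_e)/\b_\eb\}$ depending on whether $e'$ lies in $C(e)\setminus C(\eb)$, $C(\eb)\setminus C(e)$, or $C(e)\cap C(\eb)$, together with $|\b_\eb|,|\b_e| \leq n\mathcal{B}$, shows that every positive $\delta_{e'}$ is at least $1/(4n\mathcal{B})$; hence $\delta_k \geq 1/(4n\mathcal{B})$. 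Similarly, each $c(x^k)$ is rational with denominator dividing $2\b_\eb \leq 2n\mathcal{B}$, so two distinct basic objective values differ by at least $1/(4n^2\mathcal{B}^2)$, while $c(x^0) - c(x^*) \leq 2m\mathcal{CU}$ bounds the initial gap crudely. Plugging these estimates into the recurrence and using $n \in \mathcal{O}(m)$, a standard logarithm-of-ratios calculation yields $K = \mathcal{O}(nm\mathcal{UB}\log(m\mathcal{CUB}))$ non-degenerate pivots. The main technical obstacle is the joint denominator analysis for $\delta_k$ in the mixed-cycle case $e' \in C(e)\cap C(\eb)$, where the coefficient $\theta_{e'} = \pm(\b_\eb \mp \b_e)/\b_\eb$ has to be combined with the half-integer structure of $x_{e'}$ to pin down the $1/(4n\mathcal{B})$ bound.
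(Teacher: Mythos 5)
Your proposal is correct and follows essentially the same route as the paper: Dantzig's rule gives $c(x^k)-c(x^*)\leq m\mathcal{U}\,|\d_{e^*}|$, the half-integral budget $B+\tfrac{1}{2}$ forces basic flow values to be multiples of $1/(2\,b(C(\eb)))$ so that each non-degenerate pivot decreases the objective by at least $|\d_{e^*}|/\Theta(n\mathcal{B})$, and a geometric-improvement argument closes the bound. The only (immaterial) differences are that you bound $\delta_k$ by a per-region case analysis on $\theta_{e'}$ rather than the paper's uniform common-denominator argument, and you terminate the geometric decrease via the minimal positive gap between basic objective values instead of the paper's $\mathcal{O}(\frac{1}{\alpha}\log H)$ lemma with $H$ the Dantzig-free pivot count.
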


	\begin{proof}
		The proof of the lemma is similar to the one for the traditional network simplex algorithm given in \citep{AhujaPaper}. Let $x^k$ denote the basic feasible flow that is obtained after the $k$-th non-degenerate step of the algorithm and let $c(x^k)$ denote its objective function value. Moreover, let $(L,T,U,\eb)$ denote the corresponding basis structure. According to Corollary~\ref{cor:CMCFPC:Simplex:IfFractionalThenOnCycle}, each flow~$x^k$ can be decomposed into an integral flow~$x^I$ and a fractional flow~$x^C$ on the cycle~$C(\eb)$. In particular, since~$x^k$ satisfies $b(x^k) = B' = B + \frac{1}{2}$ and $x^C$ is a flow on the cycle~$C(\eb)$, it holds that $x^C_e = \frac{p}{b(C(\eb))}$ for $p \colonequals (B + \frac{1}{2}) - \sum_{e \in E} b_e \cdot x^I_e$, so the flow on every edge in $x^k$ is an integral multiple of $\frac{1}{2 \cdot b(C(\eb))}$. Thus, it holds that $|x^k_e - x^{k+1}_e|$ is either zero or at least $\frac{1}{2n\mathcal{B}}$ for each edge~$e \in E$. Moreover, note that the minimum absolute value of the reduced costs of any edge~$e$ that violates its optimality condition can be bounded as follows:
		\begin{align*}
			|\d_e| &= \left| \c_e - \frac{\b_e}{\b_\eb} \cdot \c_\eb \right| = \left| \frac{\c_e \cdot \b_\eb - \b_e \cdot \c_\eb}{\b_\eb} \right| \\
			&\geq \frac{1}{b(C(\eb))} \geq \frac{1}{n\mathcal{B}}.
		\end{align*}
		Since the objective function value of any flow is bounded from below by $-m\mathcal{C}\mathcal{U}$, we, thus, get that the maximum number of non-degenerate simplex pivots \emph{without} using Dantzig's pivoting rule is bounded by $\mathcal{O}(m\mathcal{CU} \cdot 2n\mathcal{B} \cdot n\mathcal{B}) = \mathcal{O}(n^2m\mathcal{CUB}^2)$.

		Let $\Delta \colonequals \max\{ -\min_{e \in L} \d_e, \max_{e \in U} \d_e \}$ denote the maximum violation of the optimality conditions of any edge in $L \cup U$ and let $e$ denote the corresponding edge that is chosen based on Dantzig's pivoting rule. Since sending one unit of flow over $C(e)$ reduces the objective function value by $\Delta$, we get that
		\begin{align}
			c(x^k) - c(x^{k+1}) \geq \frac{\Delta}{2n\mathcal{B}}  \label{eqn:CMCFPC:Simplex:NondegeneratePivotsProgressLowerBound}
		\end{align}
		Moreover, if $x^*$ denotes an optimal solution to the problem, we get according to equation~\eqref{eqn:CMCFPC:Simplex:OptimizationEquivalent} in the proof of Lemma~\ref{lem:CMCFPC:Simplex:OptimizationEquivalent} that
		\begin{align}
			& c(x^k) - c(x^*) \nonumber\\
			=\ & \d(x^k) - \d(x^*) \nonumber\\
			=\ & \sum_{e \in E} \d_e \cdot (x^k_e - x^*_e) \nonumber\\
			=\ & \sum_{e \in L} \d_e \cdot (-x^*_e) + \sum_{e \in U} \d_e \cdot (u_e - x^*_e) \\
			\leq\ & m \Delta \mathcal{U}.    \label{eqn:CMCFPC:Simplex:NondegeneratePivotsProgressUpperBound}
		\end{align}
		Combining equations~\eqref{eqn:CMCFPC:Simplex:NondegeneratePivotsProgressLowerBound} and \eqref{eqn:CMCFPC:Simplex:NondegeneratePivotsProgressUpperBound}, we, thus, get that
		\begin{align*}
			c(x^k) - c(x^{k+1}) \geq \frac{c(x^k) - c(x^*)}{2nm \mathcal{UB}},
		\end{align*}
		i.e., after each non-degenerate simplex pivot, the gap to the optimal solution with respect to the objective function value is reduced by a factor of at least $\frac{1}{2nm \mathcal{UB}}$. \citet[p. 67]{Ahuja} show that, if $H$ is the maximum number of improving steps of any algorithm and if this algorithm reduces the gap to the optimal solution by a fraction of at least $\alpha$ in each step, then the maximum number of steps is bounded by $\mathcal{O}(\frac{1}{\alpha} \log H)$. Thus, since $H \in \mathcal{O}(n^2m\mathcal{CUB}^2)$ in our case as shown above, we get that the maximum number of non-degenerate simplex pivots using Dantzig's pivoting rule is in $\mathcal{O}(2nm \mathcal{UB} \log(n^2m\mathcal{CUB}^2)) = \mathcal{O}(nm \mathcal{UB} \log(m\mathcal{CUB}))$.
	\qed\end{proof}

	In Lemma~\ref{lem:CMCFPC:Simplex:StronglyFeasible}, it was shown that we can obtain a strongly feasible basis structure again when performing a simplex pivot on a strongly feasible basis structure. However, it remains open how to determine an initial strongly feasible basis structure. This will be shown in the following lemma:

	\begin{lemma}\label{lem:CMCFPC:Simplex:InitialSolution}
		An initial strongly feasible basis structure~$(L,T,U,\eb)$ for \CMCFPC and the corresponding basic feasible solution~$x$ can be determined in $\mathcal{O}(m)$~time.
	\end{lemma}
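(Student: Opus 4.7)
The plan is to emulate the big-$M$ initialization used for the traditional network simplex algorithm: I adjoin a small set of artificial edges that by themselves form a strongly feasible basis structure, with costs chosen large enough that the algorithm pivots them out on its way to the optimum.

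First, I pick an arbitrary node $r \in V$ as the root. For each other node $v$, I add an artificial edge $a_v = (v, r)$ with capacity at least $\lceil B' \rceil$, cost $M$ (to be fixed later), and zero usage fee. I then add one more artificial edge $\eb = (r, v_0)$ for some fixed $v_0 \neq r$ with capacity at least $\lceil B' \rceil$, cost $M$, and unit usage fee $b_\eb = 1$. Setting $T = \{a_v : v \neq r\}$ yields a star-shaped spanning tree; I place every remaining edge into $L$ at its lower bound and leave $U$ empty. The cycle $C(\eb)$ consists of the two parallel edges $\eb$ and $a_{v_0}$, giving $b(C(\eb)) = 1 - 0 = 1 \neq 0$, so $(L, T, U, \eb)$ is a valid basis structure. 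Following the two-step construction of the basic solution recalled earlier, every tree edge first receives zero flow (since all $L$-edges are zero and the star forces this via flow conservation), and I then send $B'$ units of flow around $C(\eb)$, producing $x_\eb = x_{a_{v_0}} = B'$ and $x_e = 0$ elsewhere. Flow conservation and the capacity bounds hold by construction, and $b(x) = 1 \cdot B' = B'$, so the flow is feasible.

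For strong feasibility I use Ahuja's characterization: from every $v \neq r$ the tree edge $a_v$ is directed toward the root with residual capacity at least $\lceil B' \rceil - B' > 0$, so a positive additional amount of flow can be routed from $v$ to $r$ along $T$. Choosing $M$ larger than $m \mathcal{C} \mathcal{U}$ gives the standard big-$M$ argument, which, together with the fact that the optimum of the original instance is non-positive, ensures that no optimal solution of the augmented instance sends flow on any artificial edge. The whole construction adds $n$ edges and sets $O(n)$ flow values, so it runs in $O(n) = O(m)$ time by Assumption~\ref{ass:CMCFPC:Simplex:StronglyConnected}.

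The one point that I expect to be the trickiest is guaranteeing $b(C(\eb)) \neq 0$ from the very first basis structure, since an arbitrary choice of artificial edges could accidentally create a cycle whose usage fees cancel. I sidestep this by engineering the star edges to have zero usage fee and $\eb$ to have unit usage fee, which makes the usage fee of $C(\eb)$ equal to $b_\eb = 1$ by inspection; a feasible flow corresponding to this basis structure then exists precisely because $B' > 0$ and the artificial capacities are chosen to dominate $B'$.
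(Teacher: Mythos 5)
Your construction does produce a valid initial strongly feasible basis structure, but it takes a genuinely different route from the paper, and the big-$M$ part of your argument has a quantitative gap. The paper uses no artificial costs at all: it inserts only \emph{two} antiparallel artificial edges $e_0=(v,w)$ and $e_0'=(w,v)$ with cost $1$, capacity $1$, and usage fees $B$ and $B+1$, puts flow $\tfrac{1}{2}$ on each (so that $b(x)=\tfrac{1}{2}(2B+1)=B'$ is absorbed entirely by this two-edge cycle, which becomes $C(\eb)$), and takes as spanning tree $e_0$ together with an in-tree of the \emph{original} graph rooted at $w$, which exists by Assumption~\ref{ass:CMCFPC:Simplex:StronglyConnected}; strong feasibility follows from the in-tree plus the two-sided residual capacity of $e_0$ at flow value $\tfrac{1}{2}$. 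The payoff is that the artificial edges have cost $1$ and capacity $1$, so they do not inflate the parameters $\mathcal{C}$ and $\mathcal{U}$ that enter Theorem~\ref{thm:CMCFPC:Simplex:DegeneratePivots} (where the potentials range over $\{-n\mathcal{C},\dots,n\mathcal{C}\}$) and Lemma~\ref{lem:CMCFPC:Simplex:NondegeneratePivots}. Your star of $n$ artificial edges with cost $M$ and capacity $\lceil B'\rceil$ replaces $\mathcal{C}$ by $M$ and $\mathcal{U}$ by $\max\{\mathcal{U},B+1\}$ in those bounds, so even where correct it would weaken the paper's stated running time.

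The concrete gap is the claim that $M>m\mathcal{C}\mathcal{U}$ ``ensures that no optimal solution of the augmented instance sends flow on any artificial edge.'' In this problem a basic feasible flow of the transformed problem carries a \emph{fractional} amount of flow on every edge of $C(\eb)$ (Lemma~\ref{lem:CMCFPC:Simplex:IfOnCycleThenFractional}), and nothing prevents an optimal basis of the augmented instance from having one of your artificial edges on that cycle with a flow value as small as $\tfrac{1}{2\,b(C(\eb))}\geq\tfrac{1}{2n\mathcal{B}}$. The resulting cost penalty is then only about $M/(2n\mathcal{B})$, which for $M$ just above $m\mathcal{C}\mathcal{U}$ need not exceed the up to $2m\mathcal{C}\mathcal{U}$ by which such a solution's $c$-cost could undercut the best artificial-free one; you need $M=\Omega(nm\mathcal{B}\mathcal{C}\mathcal{U})$, which makes the inflation of $\mathcal{C}$ noted above even more severe. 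The remainder of your argument --- the star is a spanning tree, $b(C(\eb))=1\neq 0$, the basic solution routes $B'$ units around the two parallel edges between $r$ and $v_0$ within their capacities, and every tree edge points toward the root with positive residual capacity --- is correct and does establish the lemma as literally stated in $\mathcal{O}(m)$ time.
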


	\begin{proof}
		% Consider two arbitrary nodes~$v,w \in V$ with $v \neq w$ and let $P$ denote a directed simple path from $v$ to $w$, which exists according to Assumption~\ref{ass:CMCFPC:Simplex:StronglyConnected}. Moreover, let $\eb \in \delta^+(v)$ denote the first edge on $P$. For $q$ to be the smallest natural number such that $q \cdot (2B + 1) \geq b(P)$, we insert an artificial edge~$e_0$ from $w$ to $v$ with costs~$c_{e_0} \colonequals 1$, capacity~$u_{e_0} \colonequals 1$, and usage fees
		For two arbitrary nodes~$v,w \in V$ in the given graph~$G=(V,E)$, we insert an artificial edge~$e_0 = (v,w)$ with costs~$c_{e_0} \colonequals 1$, capacity~$u_{e_0} \colonequals 1$, and usage fees~$b_{e_0} \colonequals B$. Moreover, we insert a second artificial edge~$e'_0 = (w,v)$ with costs~$c_{e'_0} \colonequals 1$, capacity~$u_{e'_0} \colonequals 1$, and usage fees~$b_{e'_0} \colonequals B+1$. The initial basic feasible solution~$x$ is defined by $x_{e_0} \colonequals 0.5$, $x_{e'_0} \colonequals 0.5$, and $x_e \colonequals 0$ for each $e \in E$ such that $b(x) = 0.5 \cdot (2B + 1) = B'$. The spanning tree~$T$ consists of $e_0$ as well as a spanning tree of the nodes in $V \setminus \{v\}$ that is a directed in-tree\footnotemark\xspace with root~$w$. Note that such an in-tree exists according to Assumption~\ref{ass:CMCFPC:Simplex:StronglyConnected} and can be found, e.g., by a depth-first search in $\mathcal{O}(m)$~time. Moreover, note that we can send a positive amount of flow from every node in $V$ to $v$ by using the unique path in the in-tree in combination with $e_0$. Hence, we obtain a strongly feasible basis structure by setting $\eb \colonequals e'_0$, choosing $T$ as defined above, and setting~$U \colonequals \emptyset$ and $L \colonequals E \setminus T$. Note that the new edges do not influence the optimal solution value since $e_0$ and $e'_0$ will have zero flow in any optimal solution.
		\footnotetext{An \emph{in-tree} is a tree in which all edges are directed towards the root node.}
	\qed\end{proof}

	We are now ready to prove the main result of the paper:

	\begin{theorem}
		The network simplex algorithm for \CMCFPC can be implemented to run in $\mathcal{O}(n^4 m^2 \mathcal{CUB}^2 \cdot \log(m\mathcal{CUB}))$~time.
	\end{theorem}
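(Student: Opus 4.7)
The plan is to combine Lemma~\ref{lem:CMCFPC:Simplex:InitialSolution}, Theorem~\ref{thm:CMCFPC:Simplex:DegeneratePivots}, and Lemma~\ref{lem:CMCFPC:Simplex:NondegeneratePivots} to bound the total number of simplex pivots, and then to argue that each pivot can be carried out in $\mathcal{O}(m)$ time. First I would note that by Lemma~\ref{lem:CMCFPC:Simplex:InitialSolution} an initial strongly feasible basis structure is available in $\mathcal{O}(m)$~time, and that by Lemma~\ref{lem:CMCFPC:Simplex:StronglyFeasible} the algorithm can then maintain strong feasibility throughout. With this, Theorem~\ref{thm:CMCFPC:Simplex:DegeneratePivots} applies and bounds the length of any run of consecutive degenerate pivots by $\mathcal{O}(n^3\mathcal{CB})$, while Lemma~\ref{lem:CMCFPC:Simplex:NondegeneratePivots} bounds the total number of non-degenerate pivots under Dantzig's rule by $\mathcal{O}(nm\mathcal{UB}\log(m\mathcal{CUB}))$.

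Multiplying these two bounds yields an overall pivot count of
\[
\mathcal{O}\bigl(n^3\mathcal{CB}\bigr)\cdot\mathcal{O}\bigl(nm\mathcal{UB}\log(m\mathcal{CUB})\bigr)=\mathcal{O}\bigl(n^4 m\,\mathcal{CUB}^2\log(m\mathcal{CUB})\bigr).
\]
To reach the claimed total running time, I would then argue that the work per pivot is $\mathcal{O}(m)$: finding an entering edge with Dantzig's rule requires scanning the reduced costs $\d_e$ over all edges, which is $\mathcal{O}(m)$; identifying the cycles $C(e)$ and $C(\eb)$, computing the ratio test and the resulting $\delta$, and identifying a leaving edge according to the rule in the proof of Lemma~\ref{lem:CMCFPC:Simplex:StronglyFeasible}, each amount to a traversal of $T\cup\{e,\eb\}$ of size $\mathcal{O}(n)=\mathcal{O}(m)$ by Assumption~\ref{ass:CMCFPC:Simplex:StronglyConnected}; and updating the tree structure together with the potentials $\pi$ and $\mu$ on the subtree hanging off the leaving edge can be performed in $\mathcal{O}(n)=\mathcal{O}(m)$~time exactly as in the traditional network simplex algorithm (cf.~\citep[Ch.~11]{Ahuja}). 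Multiplying the pivot bound by $\mathcal{O}(m)$ gives
\[
\mathcal{O}\bigl(n^4 m^2\,\mathcal{CUB}^2\log(m\mathcal{CUB})\bigr),
\]
as claimed. Finally, I would observe that the post-processing step that transforms the optimal basic solution of the transformed problem back into one of the original problem adds only $\mathcal{O}(n)$ time and is therefore absorbed.

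The hard part of this proof is essentially already behind us: the difficulty was concentrated in Theorem~\ref{thm:CMCFPC:Simplex:DegeneratePivots} and Lemma~\ref{lem:CMCFPC:Simplex:NondegeneratePivots}, where the integrality of $\pi$ and $\mu$ and the fractional flow on $C(\eb)$ were used to derive the two pivot count bounds. Here the only care that is needed is to ensure that the $\mathcal{O}(m)$ per-pivot bound truly covers all bookkeeping -- in particular, the possibly more involved case analysis of the leaving edge (the three subcases $E'\subseteq C(e)\setminus C(\eb)$, $E'\subseteq C(\eb)\setminus C(e)$, and $E'\subseteq C(e)\cap C(\eb)$) and the potential reassignment of $\eb$ when the old cycle disappears, all of which reduce to constant-time decisions after one walk along $T\cup\{e,\eb\}$.
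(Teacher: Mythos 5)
Your proposal is correct and follows essentially the same route as the paper: combine the $\mathcal{O}(m)$ initialization of Lemma~\ref{lem:CMCFPC:Simplex:InitialSolution}, the $\mathcal{O}(n^3\mathcal{CB})$ bound on consecutive degenerate pivots from Theorem~\ref{thm:CMCFPC:Simplex:DegeneratePivots}, the $\mathcal{O}(nm\mathcal{UB}\log(m\mathcal{CUB}))$ bound on non-degenerate pivots from Lemma~\ref{lem:CMCFPC:Simplex:NondegeneratePivots}, and an $\mathcal{O}(m)$ cost per pivot. Your additional detail on why each pivot costs only $\mathcal{O}(m)$ (and the $\mathcal{O}(n)$ post-processing) is a welcome elaboration of what the paper states more tersely.
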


	\begin{proof}
		According to Lemma~\ref{lem:CMCFPC:Simplex:InitialSolution}, we can determine an initial basis structure and the corresponding basic feasible solution in $\mathcal{O}(m)$~time. It is easy to see that a single simplex pivot as described above can be implemented to run in $\mathcal{O}(m)$~time, including the overhead to determine the entering edge and the leaving edge according to Dantzig's pivoting rule and the above leaving edge rules. Moreover, the maximum number of non-degenerate simplex pivots is given by $\mathcal{O}(nm \mathcal{UB} \log(m\mathcal{CUB}))$ as shown in Lemma~\ref{lem:CMCFPC:Simplex:NondegeneratePivots}. In the worst case, each of these non-degenerate simplex pivots is followed by a sequence of $\mathcal{O}(n^3 \mathcal{BC})$ degenerate pivots according to Theorem~\ref{thm:CMCFPC:Simplex:DegeneratePivots}, which leads to an overall running time of 
		\begin{align*}
			&\mathcal{O}(m \cdot n^3 \mathcal{BC} \cdot nm \mathcal{UB} \log(m\mathcal{CUB})) \\
			=\ & \mathcal{O}(n^4 m^2 \mathcal{CUB}^2 \log (m\mathcal{CUB})),
		\end{align*}
		which shows the claim.
	\qed\end{proof}

	\section{Computational Results}

	In order to evaluate the empirical performance of the presented algorithm, we implemented it in C++ and compiled it for Windows 7 64bit with Microsoft Visual Studio Community 2015 with all available optimization options. The tests were performed on an Intel Core i5 processor at 2.53 GHz with one core and 4GB of RAM. We implemented the presented algorithm using Dantzig's pivoting rule and evaluated its performance against Gurobi 6.5.1 64bit \citep{Gurobi}. As in \citep{CalicskanNetworkSimplex}, the underlying networks were constructed with NETGEN \citep{Netgen}. For densities~$d \in \{8,16,32\}$ and node sets of size $n \in \{2^i : i \in \{8,\dotsc,15\}\}$, we generated test instances with $n$~nodes and $m \colonequals n \cdot d$~edges and computed the mean execution times over ten instances each.

	As a validation check, we first adapted the approach of \citet{CalicskanNetworkSimplex} by using an in-tree containing the shortest paths from each node $v \in V \setminus \{t\}$ to some sink~$t$ with respect to the usage fees~$b_e$ as the starting solution (cf. Lemma~\ref{lem:CMCFPC:Simplex:InitialSolution}) and by applying the algorithm to the maximum flow variant of the problem. For this subproblem, we observed similar running times as stated in \citep{CalicskanNetworkSimplex}, beating Gurobi by factors of up to $368$. These remarkable running times result from the fact that the procedure only performs a very little number of iterations. One possible explanation is that an optimal solution to the budget-constrained maximum flow problem can be obtained by repeatedly augmenting flow on shortest paths with relation to the usage fees~$b_e$ in the residual network (cf. \citep{AhujaConstrainedMaxFlow}), which can be done within a few number of iterations due to the chosen starting solution.

	However, when applied to the minimum cost flow variant of the problem, the running times became significantly worse. We compared the presented network simplex algorithm with the dual, primal, and barrier solvers provided with Gurobi. As it is shown in Table~\ref{tab:CMCFPC:ResultsMCF}, the dual solver was the fastest among all three solvers in most instances. For small instances, our specialized network simplex algorithm could beat all solvers, but performed less competitive in larger instances and was slower than the dual solver by a factor of up to $4.29$. Nevertheless, our network simplex algorithm was faster than Gurobi's barrier solver in all cases and faster than the primal solver for instances with density $8$ and $16$. Independent of the achieved running times, one major advantage of the presented algorithm is that it only consumed 58MB of memory even in the largest instance while Gurobi's dual solver needed up to 2.71GB RAM and did not solve the problem when compiled at 32bit (while using a 32bit version of Gurobi) since only 2GB of memory can be addressed.

	While the number of non-degenerate pivots of the network simplex method was still low, the percentage of degenerate pivots amounted to a fraction of up to $98.9\%$. Using a more sophisticated starting solution based on a scaled (traditional) minimum cost flow that is turned into a basic feasible flow (cf. Theorem~\ref{thm:CMCFPC:Simplex:OptimalBasicFeasibleFlow}), the relative amount of degenerate pivots could be reduced to $17\%$ but the progress in each non-degenerate iteration became worse.

	\begin{table*}[ht]
		\centering
		\small
		\begin{tabular}{rrr|r|rrr}
			\multicolumn{3}{l|}{Network size} & \multicolumn{4}{l}{CPU times (seconds)} \\\hline
			$n$ & $d$ & $m$ &		Network simplex & \multicolumn{3}{l}{Gurobi} \\
			&&&						& \multicolumn{1}{l}{Dual} & \multicolumn{1}{l}{Primal} & \multicolumn{1}{l}{Barrier} \\\hline
			256 & 8 & 2048 &		0.0062 & 0.0252 & 0.0298 & 0.1206 \\
			512 & 8 & 4096 &		0.0208 & 0.0610 & 0.0618 & 0.3805 \\
			1024 & 8 & 8192 &		0.0741 & 0.1120 & 0.1820 & 1.0589 \\
			2048 & 8 & 16384 &		0.3602 & 0.4868 & 0.6541 & 5.0358 \\
			4096 & 8 & 32768 &		1.6589 & 1.3137 & 2.5439 & 42.9839 \\
			8192 & 8 & 65536 &		6.4174 & 4.6624 & 12.7127 & 224.0670 \\
			16384 & 8 & 131072 &	39.0361 & 23.7921 & 71.9850 & 1501.2466 \\
			32768 & 8 & 262144 &	232.0375 & 133.3492 & 478.4381 & --- \\
			&&&&&&\\
			256 & 16 & 4096 &		0.0127 & 0.0347 & 0.0414 & 0.2308 \\
			512 & 16 & 8192 &		0.0472 & 0.1254 & 0.0928 & 0.6435 \\
			1024 & 16 & 16384 &		0.2132 & 0.4578 & 0.2869 & 2.6172 \\
			2048 & 16 & 32768 &		0.7776 & 0.7839 & 0.9485 & 11.3058 \\
			4096 & 16 & 65536 &		3.8542 & 3.9262 & 3.9801 & 72.5990 \\
			8192 & 16 & 131072 &	19.2471 & 13.3666 & 19.6259 & 410.8815 \\
			16384 & 16 & 262144 &	114.8645 & 57.9140 & 114.9896 & --- \\
			32768 & 16 & 524288 &	553.8394 & 273.8601 & 656.0465 & --- \\
			&&&&&&\\
			256 & 32 & 8192 &		0.0319 & 0.0602 & 0.0722 & 0.5702 \\
			512 & 32 & 16384 &		0.1266 & 0.1909 & 0.1875 & 1.1407 \\
			1024 & 32 & 32768 &		0.5474 & 0.5659 & 0.4983 & 4.1651 \\
			2048 & 32 & 65536 &		2.5307 & 1.1637 & 1.4056 & 22.3124 \\
			4096 & 32 & 131072 &	11.9041 & 4.9066 & 5.4382 & 101.0424 \\
			8192 & 32 & 262144 &	64.0416 & 26.5862 & 26.6424 & 668.8592 \\
			16384 & 32 & 524288 &	358.6511 & 97.3421 & 125.7771 & --- \\
			32768 & 32 & 1048576 &	1740.6900 & 406.1623 & 761.5276 & --- \\
		\end{tabular}
		\caption{Computational results comparing the running time of the presented network simplex algorithm with Gurobi's dual, primal, and barrier solvers. Instances without running times did not obtain a solution within 1800 seconds.}
		\label{tab:CMCFPC:ResultsMCF}
	\end{table*}

	\section{Conclusion and Future Work}

	In this paper, we developed a specialized network simplex method for the budget-constrained minimum cost flow problem. In particular, we proved optimality criteria for the problem based on a novel incorporation of two kinds of integral node potentials and three kinds of reduced costs and presented a fully combinatorial description of the procedure. Moreover, we combined two common techniques that are used to prevent cycling in the traditional network simplex algorithm into a rule for the leaving edge that prevents cycling. Finally, we could show that Dantzig's pivoting rule can be used in order to reduce the number of non-degenerate pivots significantly, which in combination with a pseudo-polynomial number of successive degenerate pivots leads to a pseudo-polynomial time bound for the overall procedure.

	It remains open for future research if the practical performance of the procedure can be improved by using a more sophisticated starting solution similar to the case of the maximum flow variant of the problem. One key issue in this respect seems to develop new rules for the choice of the entering and the leaving edge, which reduce the number of degenerate iterations. Thus, at the time being, the results of this paper can be viewed more as a theoretical contribution, which establishes a pseudo-polynomial running time and closes open issues left, e.g., in \citep{CalicskanNetworkSimplex}.

	\section*{Acknowledgements}

	This work was partially supported by the German Federal Ministry of Education and Research within the project ``SinOptiKom -- Cross-sectoral Optimization of Transformation Processes in Municipal Infrastructures in Rural Areas''.
	\\

	\bibliographystyle{elsarticle-harv}
	\bibliography{/Users/holzhaus/Documents/Forschung/literature}
\end{document}